\theoremstyle{plain}
\newtheorem{theorem}{Theorem}[section]
\newtheorem{lemma}[theorem]{Lemma}
\newtheorem{remark}[theorem]{Remark}
\newtheorem{conj}[theorem]{Conjecture}
\newtheorem{problem}[theorem]{Open Problem}
\newtheorem{example}[theorem]{Example}
\newcommand{\ord}{{\mathrm{ord}}}
\newcommand{\lcm}{{\mathrm{lcm}}}
\newcommand{\tr}{{\mathrm{Tr}}}
\newcommand{\gf}{{\mathrm{GF}}}
\newcommand{\Z}{\mathbb{{Z}}}
\newcommand{\m}{\mathsf{M}}
\newcommand{\C}{{\mathcal{C}}}
\newcommand{\M}{{\mathsf{M}}}
\newcommand{\bc}{{\mathbf{c}}}
\newcommand{\bone}{{\mathbf{1}}}
\definecolor{Gray}{gray}{0.9}
\begin{document}

\title{Infinitely many families of distance-optimal binary linear codes with respect to the sphere packing bound}

\author{Hao Chen, Conghui Xie, Cunsheng Ding
        \thanks{
        Hao Chen is with the College of Information Science and Technology, Jinan University, Guangzhou, Guangdong, 510632, China (haochen@jnu.edu.cn). 
        Conghui Xie is with the Hetao Institute of Mathematics and Interdisciplinary Sciences,  Shenzhen, Guangdong, 518033, China (conghui@stu2021.jnu.edu.cn).
        Cunsheng Ding is with the Department of Computer Science and Engineering, Hong Kong University of Science and Technology, Hong Kong, China (cding@ust.hk).

        The research of Hao Chen was supported by the NSFC Grant 62032009. 
        The research of Cunsheng Ding was supported by the Hong Kong Research Grants Council under Project 16301123.}
}

\maketitle

\begin{abstract}
R. W. Hamming published the Hamming codes and the sphere packing bound in 1950.  
In the past 75 years,  infinite families of distance-optimal linear codes over finite fields with minimum distance at most 8 with respect to the sphere packing bound have been reported in the literature.  However, 
it is a 75-year-old open problem in coding theory whether there is an infinite family of distance-optimal linear codes over finite fields with arbitrarily large minimum distance with respect to the sphere packing bound.  
This main objective of this paper is to settle this long-standing open problem in coding theory. 

As by-products,  several infinite families of distance-optimal binary codes with small minimum distances are presented.  Two infinite families of binary five-weight codes are reported.  Some open problems are also proposed. 

\end{abstract}

\begin{IEEEkeywords}   
Linear code; Cyclic code;  BCH code
\end{IEEEkeywords}

\section{Introduction}

\subsection{Linear codes, cyclic codes and BCH codes}

Let $q$ be a power of a prime and let $n$ be a positive integer. 
An $[n,k, d]$ linear code $\C$ over $\gf(q)$ is a $k$-dimensional linear subspace of $\gf(q)$ 
with minimum Hamming distance $d$.  An $[n,k, d]$ linear code $\C$ over $\gf(q)$ is said to 
be distance-optimal if there is no $[n,k]$ linear code $\C$ over $\gf(q)$ with distance at least 
$d+1$.  Throughout this paper,  we use $\dim(\C)$ and $d(\C)$ to denote the dimension and 
minimum distance of the linear code $\C$, respectively. Let $A_i(\C)$ denote the number of 
codewords with Hamming weigh $i$ in $\C$. The polynomial $\sum_{j=0}^n A_i(\C)z^j$ is 
called the weight enumerator of $\C$.

Let $\textbf{b} = \mathbf (b_0, b_1, \ldots, b_{n-1})$ and $\textbf{c} = \mathbf (c_0, c_1, \ldots, c_{n-1})$ be two vectors in $\gf(q)^n$.  The standard inner product of the two vectors $\textbf{b}$ and $\textbf{c}$ is 
defined by 
$$ 
\textbf{b} \textbf{c}^T =\sum_{i=0}^{n-1} {b_i}{c_i}.
$$
The (Euclidean) dual code of $\mathcal C$, denoted by $\mathcal C^\perp$, is defined by
$$\mathcal C^{\perp} =\{\textbf{b} \in  \gf(q)^n: \textbf{b} \textbf{c}^T =\sum_{i=0}^{n-1} {b_i}{c_i} = 0 \ \  \forall \ \textbf{c} \in \mathcal C\}.  $$

An $[n,k, d]$ linear code $\C$ over $\gf(q)$ is said to be {\em cyclic} if
$(c_0,c_1, \ldots, c_{n-1}) \in \C$ implies $(c_{n-1}, c_0, c_1, \ldots, c_{n-2})
\in \C$.
By identifying a vector $(c_0,c_1, \ldots, c_{n-1}) \in \gf(q)^n$
with the polynomial 
$$
c_0+c_1x+c_2x^2+ \cdots + c_{n-1}x^{n-1} \in \gf(q)[x]/\langle x^n-1 \rangle,
$$
a code $\C$ of length $n$ over $\gf(q)$ corresponds to a subset of the quotient ring
$\gf(q)[x]/$ $\langle x^n-1 \rangle$.  
It is well known that 
a linear code $\C$ is cyclic if and only if the corresponding subset in $\gf(q)[x]/\langle x^n-1 \rangle$
is an ideal of the quotient ring $\gf(q)[x]/\langle x^n-1 \rangle$.
Notice that every ideal of $\gf(q)[x]/\langle x^n-1 \rangle$ is principal. Let $\C=\langle g(x) \rangle$ be a
cyclic code, where $g(x)$ is monic and has the smallest degree among all the
generators of $\C$. Then $g(x)$ must be unique and a divisor of $x^n-1$, and is called the {\em generator polynomial}
of $\C$
and $h(x)=(x^n-1)/g(x)$ is referred to as the {\em parity-check} polynomial of $\C$. 
The dual code $\C^\perp$ of the cyclic code $\C$ is also cyclic and has the generator polynomial 
$h^*(x):=h_k^{-1} x^kh(x^{-1})$, where $h_k$ is the coefficient of the term $x^k$ in $h(x)$.  

Recall that the generator polynomial of a cyclic code $\C$ of length $n$ must be a divisor of the 
polynomial $x^n-1$ over $\gf(q)$.  To study cyclic codes of length $n$ over $\gf(q)$, we must 
study the factorization of $x^n-1$ over $\gf(q)$.  To this end, we must define and deal with  
the $q$-cyclotomic cosets modulo $n$. 

Let $\gcd(n,q)=1$. 
Let $\Z_n$ denote  the set $\{0,1,2, \ldots, n-1\}$.  Let $s$ be an integer with $0 \leq s <n$. The \emph{$q$-cyclotomic coset of $s$ modulo $n$\index{$q$-cyclotomic coset modulo $n$}} is defined by
$$
C_s=\{s, sq, sq^2, \ldots, sq^{\ell_s-1}\} \bmod n \subseteq \Z_n,
$$
where $\ell_s$ is the smallest positive integer such that $s \equiv s q^{\ell_s} \pmod{n}$, and is the size of the
$q$-cyclotomic coset $C_s$. The smallest integer in $C_s$ is named as the \emph{coset leader\index{coset leader}} of $C_s$.
Let $\Gamma_{(n,q)}$ be the set of all the coset leaders. We have then $C_s \cap C_t = \emptyset$ for any two
distinct elements $s$ and $t$ in  $\Gamma_{(n,q)}$, and
\begin{eqnarray}\label{eqn-cosetPP}
	\bigcup_{s \in  \Gamma_{(n,q)} } C_s = \Z_n.
\end{eqnarray}
Consequently, the distinct $q$-cyclotomic cosets modulo $n$ partition $\Z_n$.

Let $m=\ord_{n}(q)$ be the order of $q$ modulo $n$, which is the smallest positive integer such that
$q^m \equiv 1 \pmod{n}$, and let $\alpha$ be a primitive element of $\gf(q^m)$. Put $\beta=\alpha^{(q^m-1)/n}$.
Then $\beta$ is a primitive $n$-th root of unity in $\gf(q^m)$. The minimal polynomial $\M_{\beta^s}(x)$
of $\beta^s$ over $\gf(q)$ is the monic polynomial of the smallest degree over $\gf(q)$ with $\beta^s$
as a root.  It is easily seen that this polynomial is given by
$$
\M_{\beta^s}(x)=\prod_{i \in C_s} (x-\beta^i) \in \gf(q)[x],
$$
which is irreducible over $\gf(q)$. It then follows from (\ref{eqn-cosetPP}) that
\begin{eqnarray}\label{eqn-canonicalfact}
	x^n-1=\prod_{s \in  \Gamma_{(n,q)}} \M_{\beta^s}(x),
\end{eqnarray}
which is the factorization of $x^n-1$ into irreducible factors over $\gf(q)$. This canonical factorization of $x^n-1$
over $\gf(q)$ is fundamental for the study of cyclic codes.

Let $\C$ be a cyclic code of length $n$  with generator polynomial $g(x)$.
The set $T=\{0 \le i \le n-1 : g(\beta^i)=0\}$ is called the
\emph{defining set} of $\mathcal C$ with respect to $\beta$.

Let $\delta$ be an integer with $2 \leq \delta \leq n$ and let $b$ be an integer.
A \emph{BCH code\index{BCH codes}} over $\gf(q)$
with length $n$ and \emph{designed distance} $\delta$, denoted by $\C_{(q,n,\delta,b)}$, is the cyclic code with
generator polynomial
\begin{eqnarray}\label{eqn-BCHgeneratorpolynomial}
	g_{(q,n,\delta,b)}(x)=\lcm(\M_{\beta^b}(x), \M_{\beta^{b+1}}(x), \ldots, \M_{\beta^{b+\delta-2}}(x)),
\end{eqnarray}
where the least common multiple is computed over $\gf(q)[x]$.

BCH codes over finite fields form a subclass of cyclic codes with many interesting properties and are closely related to many 
areas of mathematics.  In many cases BCH codes are the best linear codes.
For example, among all binary cyclic codes of odd length $n$ with $n \leq 125$ the best cyclic code is always a BCH code
except for two special cases \cite{Dingbook15}.  However,  certain BCH codes have bad parameters also.  
Reed-Solomon codes are also BCH codes and are widely used in communication
devices and consumer electronics.  This shows the importance of BCH codes in practice. 

Binary BCH codes were invented by Hocquenghem \cite{Hocq59} and independently by Bose and Ray-Chaudhuri \cite{BC60}.
They were generalized to BCH codes over finite fields by Gorenstein and Zierler in 1961 \cite{GZ61}.  
In the past sixty-five years, a lot of progress on the study of BCH codes has been made \cite{DingLiSurvey}.  
However, the parameters of most BCH codes are open.  

\subsection{The motivation of this paper}\label{sec-motivation}

The \emph{sphere packing bound}, also called the \emph{Hamming 
bound, } states that 
\begin{equation*}
\sum\limits_{i=0}^{\lfloor \frac{d-1}{2} \rfloor} \binom{n}{i} (q-1)^i \leq q^{n-k} 
\end{equation*} 
for any $[n, k, d]$ linear code over $\gf(q)$. 
This bound was published in 1950 by R. W. Hamming \cite{Hamming}. 

A code meeting the sphere packing bound is said to be \emph{perfect\index{perfect code}}.  
Since all perfect codes are known and very rare,  it is very interesting to look for distance-optimal linear codes 
with respect to the sphere packing bound. 
In the literature there are some infinite families of  distance-optimal linear codes with minimum distance 
at most 8 with respect to the sphere packing bound (see Section \ref{sec-known11}).   
However,  it is a 75-year-old open problem in coding theory whether there is an infinite family of distance-optimal linear codes over finite fields with arbitrarily large minimum distance with respect to the sphere packing bound.   
The motivation of this paper is this  75-year-old open problem in coding theory. 

\subsection{The objectives and methodology of this paper} 

The key objective of this paper is to settle the long-standing open problem in coding theory introduced 
in Section \ref{sec-motivation}.  Another objective is to construct infinite families of distance-optimal binary codes with small minimum distances with respect to the sphere packing bound.   

To achieve these objectives in this paper, 
we will study binary BCH codes with length $n$ of the following forms:
\begin{itemize}
\item $n=(2^{2s}+1)(2^s-1)$ for positive integer $s$.
\item $n=2^{2s}+2^s+1$ for positive integer $s$. 
\item $n=(2^s-1)/\lambda$ for positive integer $s$, where $\lambda$ is a constant divisor of $2^s-1$. 
\end{itemize} 
We will also study their related codes. 

\subsection{The achievements of this paper}

The main achievement of this paper is the settlement of the 75-year-old open problem in coding theory introduced 
in Section \ref{sec-motivation} by presenting infinitely many families of distance-optimal binary codes with arbitrarily large minimum distance. 
In addition,  several infinite families of distance-optimal binary codes with small minimum distances are presented.  
Two infinite families of binary five-weight codes are documented in this paper.  Some research problems are proposed.  

\subsection{The organization of this paper}

The rest of this paper is organized as follows.  Section \ref{sec-known} summarizes known infinite families of distance-optimal binary linear codes.  Section \ref{sec-mdistance} introduces some preliminary results.  
Section \ref{sec-type1} studies the BCH codes $\C_{(2,n,\delta,b)}$ and their related codes, where 
$n=(2^{2s}+1)(2^s-1)$ for positive integer $s$. 
Section \ref{sec-type2} treats the BCH codes $\C_{(2,n,\delta,b)}$ and their related codes, where 
$n=2^{2s}+2^s+1$ for positive integer $s$. 
Section \ref{sec-type3} investigates the BCH codes $\C_{(2,n,\delta,b)}$ and their related codes, where 
$n=(2^{2s}-1)/3$ for positive integer $s$.  
Section \ref{sec-newnew} presents infinitely many families of distance-optimal binary cyclic codes with variable dimension and arbitrarily large minimum distance.  
Section \ref{sec-open} proposes some open problems. 
Section \ref{sec-summary} summarizes this paper and makes some concluding remarks. 

\section{Known infinite families of distance-optimal binary linear codes}\label{sec-known} 

Any given $[n, k, d]$ code $\C$ over $\gf(q)$ can be extended into an  $[n+1, k, \overline{d}]$ code 
$\overline{\C}$ over $\gf(q)$, where 
\begin{eqnarray}\label{eqn-extendedcodegeneral}
\overline{\C}=\left\{(c_1, \ldots, c_n, c_{n+1}):  (c_1, \ldots, c_n) \in \C, \, c_{n+1}=-\sum_{i=1}^n c_i\right\}. 
\end{eqnarray} 
By definition, $\overline{d}=d$ or $\overline{d}=d+1$.  Even if $\C$ is not distance-optimal, the extended 
code $\overline{\C}$ could be distance-optimal.  This is a motivation of studying the extended linear codes. 

\subsection{Known infinite families of distance-optimal binary linear codes with respect to the sphere packing bound}\label{sec-known11}

To the best knowledge of the authors,  the known infinite families of distance-optimal binary linear codes 
with respect to the spher packing bound are the following: 
\begin{itemize} 
\item The binary codes $\overline{\C_{(2,2^m-1,2\ell+1,1)}}$ with parameters $[2^m, 2^m-1-\ell m, 2\ell+2]$ for $m \geq 6$ and $\ell \in \{1,2,3\}$ (see  \cite{Dingbook18} for a proof of the parameters of the codes. The distance optimality of the codes can be verified by the sphere packing bound).  

\item The binary codes $\C_{(2,2^m-1,2\ell+2,0)}$ with parameters $[2^m-1, 2^m-2-\ell m, 2\ell+2]$ for $m \geq 6$ and $\ell \in \{1,2,3\}$  (see  \cite{Dingbook18} for a proof of the parameters of the codes. The distance optimality of the codes can be verified by the sphere packing bound).

\item The binary codes $\C_{(2,(2^m-1)/\lambda,4,0)}$ with parameters $[(2^m-1)/\lambda, (2^m-1)/\lambda -1-m, 4]$ for $m$ being sufficiently large, where $\lambda < 2^{(m+2)/2}$ and $\lambda$ is a divisor of $2^m-1$ \cite{XCY24}.  

\item A family of binary codes with parameters $[2^{m+s} +2^s-2^m,  2^{m+s} +2^s-2^m-2m-2, 4]$ 
in \cite{HD20}. 

\item A family of binary codes with parameters $[2^{m}+2,  2^m-2m, 6]$ 
\cite{HD20}. 

\item Several families of binary codes with minimum distance $4$ and $6$ in 
\cite{WZD21}. 

\item A family of binary quasi-cyclic codes with parameters $[2(2^m-1)/\lambda, 2(2^m-1)/\lambda -2-m, 4]$ for $m$ being sufficiently large, where $\lambda < 2^{(m-2)/2}$ and $\lambda$ is a divisor of $2^m-1$  \cite{XCY24}.  

\item The binary codes $\C_{(2,(2^m-1)/3,8,0)}$ with parameters $[(2^m-1)/3, (2^m-1)/3 -1-3m, 8]$ for $m\geq 12$ being even  \cite{XCY24}.  

\item An infinite family of binary codes with parameters $[3(2^m-1), 3(2^m-1)-3m-1,6]$  for sufficiently large $m$  \cite{ChenWu}.  

\item The binary codes $\C_{(2,(2^m-1)/\lambda,6,0)}$ with parameters $[(2^m-1)/\lambda, (2^m-1)/\lambda -1-2m, 6]$ for $m$ being sufficiently large,  where $\lambda$ is a divisor of $2^m-1$ and $\lambda < 2^{(m-2)/2}$ \cite{ChenWu}.  
\end{itemize} 
Notice that the minimum distances of all the codes in the list above are restricted in the set $\{4,6,8\}$. 

\subsection{Known infinite families of distance-optimal binary linear codes with respect to the Griesmer bound} 

For $[n, k, d]$ linear codes over $\gf(q)$, the Griesmer bound says that 
\begin{eqnarray*}
n \geq \sum_{i=0}^{k-1} \left\lceil  \frac{d}{q^i} \right\rceil.  
\end{eqnarray*}
If the equality above holds, the code $\C$ is called a \emph{Griesmer code}.  By definition, a Griesmer code 
is distance-optimal with respect to the Griesmer bound.  

There are many constructions of binary Griesmer codes and many references about binary Griesmer codes. 
For detailed information on Griesmer codes, the reader is referred to \cite{Chen251, Hamada,Hell85,Hell81,HY16,SS65,Xieetal} and the references therein.

\section{Preliminaries}\label{sec-mdistance}

Let $\C$ be a cyclic code of length $n$ over $\gf(q)$ with generator polynomial
$
g(x)=\prod_{i \in T} (x-\beta^i),
$
where $\beta$ is an $n$-th root of unity over an extension field $\gf(q^m)$, $T$ is the union of some $q$-cyclotomic cosets modulo $n$, and is called
the \textit{defining set}\index{defining set} of $\C$ with respect to $\beta$.
We say that $T$ contains a set of $s$ consecutive integers if there is some
integer $a$ such that $\{a,a+1,\ldots,a+s-1\} \bmod n \subseteq T$.
The following
is a simple but very useful lower bound  (see \cite{BC60} and \cite{Hocq59}).

\begin{theorem}[BCH bound]\label{thm-BCHbound}
	Let $\C$ be a cyclic code of length $n$ over $\gf(q)$ with defining set $T$ and minimum distance $d$. Assume that $T$
	contains $\delta -1$ consecutive integers for some integer $\delta$. Then $d \geq \delta$.
\end{theorem}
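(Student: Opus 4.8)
The plan is to prove the equivalent statement that every nonzero codeword has Hamming weight at least $\delta$, arguing by contradiction via a Vandermonde determinant. First I would record the hypothesis concretely: write the $\delta-1$ consecutive integers in $T$ as $\{b, b+1, \ldots, b+\delta-2\} \bmod n$, so that $\beta^b, \beta^{b+1}, \ldots, \beta^{b+\delta-2}$ are all roots of the generator polynomial $g(x)$, and hence roots of $c(x)=\sum_{i=0}^{n-1} c_i x^i$ for every codeword $\bc=(c_0,\ldots,c_{n-1}) \in \C$.

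Next I would suppose, for contradiction, that there is a nonzero codeword $\bc$ of weight $w \le \delta-1$, with support $\{i_1,\ldots,i_w\}$ and nonzero entries $c_{i_1},\ldots,c_{i_w}$. Using only the first $w$ of the $\delta-1$ available roots, the conditions $c(\beta^{b+\ell})=0$ for $\ell=0,1,\ldots,w-1$ become
$$
\sum_{j=1}^{w} \left(c_{i_j}\,\beta^{b\,i_j}\right)\left(\beta^{i_j}\right)^{\ell}=0, \qquad \ell=0,1,\ldots,w-1.
$$
Setting $x_j=\beta^{i_j}$ and $y_j=c_{i_j}\,\beta^{b\,i_j}$, this is a homogeneous linear system $M\mathbf{y}=\mathbf{0}$ whose coefficient matrix $M=\bigl(x_j^{\ell}\bigr)_{0\le \ell\le w-1,\ 1\le j\le w}$ is a $w\times w$ Vandermonde matrix with determinant $\prod_{1\le k<l\le w}(x_l-x_k)$.

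The crux of the argument is that this determinant is nonzero. Since $\beta$ is a primitive $n$-th root of unity (here $\gcd(n,q)=1$ is what guarantees $n$ distinct $n$-th roots of unity) and the support indices $i_1,\ldots,i_w$ are distinct elements of $\{0,1,\ldots,n-1\}$, the values $x_j=\beta^{i_j}$ are pairwise distinct, so every factor $x_l-x_k$ is nonzero and $M$ is invertible. Hence $\mathbf{y}=\mathbf{0}$, i.e.\ $c_{i_j}\,\beta^{b\,i_j}=0$ for all $j$; as $\beta^{b\,i_j}\ne 0$ this forces $c_{i_j}=0$ for every $j$, contradicting that $\bc$ is nonzero on its support. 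Therefore no nonzero codeword has weight at most $\delta-1$, and $d\ge\delta$.

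I expect the only genuinely delicate point to be the distinctness of the $x_j$, which is precisely where the primitivity of $\beta$ is essential; everything else is routine linear algebra. A minor point worth noting is that the $\delta-1$ consecutive integers may wrap around modulo $n$, but since only the residues of the exponents modulo $n=\ord(\beta)$ affect the evaluations $c(\beta^{b+\ell})$, the wrap-around does not affect the argument, and the $w$ chosen roots $\beta^b,\ldots,\beta^{b+w-1}$ remain distinct whenever $w\le\delta-1\le n$.
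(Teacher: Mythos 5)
Your proof is correct: the Vandermonde-determinant argument is the classical proof of the BCH bound, and you handle all the essential points properly --- every codeword polynomial vanishes at $\beta^{b+\ell}$ for the $\delta-1$ consecutive exponents, the values $x_j=\beta^{i_j}$ are pairwise distinct because $\beta$ is a \emph{primitive} $n$-th root of unity and the support indices lie in $\{0,\ldots,n-1\}$, and wrap-around of the exponents modulo $n$ is immaterial since $\beta^n=1$. Note that the paper itself offers no proof of this theorem (it is stated as a known result with citations to Hocquenghem and to Bose and Ray-Chaudhuri), so there is no in-paper argument to compare against; yours is the standard proof found in those sources and in textbooks, and it is complete as written.
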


The BCH bound depends on the choice of the primitive $n$-th root of unity $\beta$.
Different choices of the primitive root may yield different lower bounds. When applying
the BCH bound, it is important to choose the right primitive root so that the best lower bound is obtained.  However, it is open how
to choose such a primitive root.
By definition, for a BCH code $\C_{(q,n,\delta,b)}$ we have 
$$
d(\C_{(q,n,\delta,b)}) \geq \delta. 
$$ 

The following result is useful in some cases \cite{LiLiDing17cc}. 

\begin{theorem}\label{thm-LLD}
The BCH code $\C_{(q,n,\delta,b)}$ has minimum distance $d=\delta$ if $\delta$ divides $\gcd(n, b-1)$.
\end{theorem}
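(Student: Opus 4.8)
The plan is to combine the BCH bound with the construction of one explicit minimum-weight codeword. Since the defining set $T$ of $\C_{(q,n,\delta,b)}$ contains the $\delta-1$ consecutive integers $b, b+1, \ldots, b+\delta-2$, the BCH bound (Theorem~\ref{thm-BCHbound}) already gives $d \geq \delta$. It therefore suffices to exhibit a nonzero codeword of Hamming weight exactly $\delta$; this forces $d \leq \delta$ and hence $d = \delta$.

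The hypothesis $\delta \mid \gcd(n, b-1)$ means $\delta \mid n$ and $b \equiv 1 \pmod{\delta}$. Writing $t = n/\delta$, the candidate codeword I would use is
$$
c(x) = \frac{x^n - 1}{x^{t} - 1} = 1 + x^{t} + x^{2t} + \cdots + x^{(\delta-1)t}.
$$
This polynomial is well-defined because $t \mid n$, it has degree $(\delta-1)t < n$, and it has exactly $\delta$ nonzero coefficients, so its Hamming weight is $\delta$. Its roots among the $n$-th roots of unity are precisely the $\beta^j$ with $\delta \nmid j$: indeed $x^t-1$ vanishes exactly at the $t$-th roots of unity $\{\beta^j : \delta \mid j\}$, and dividing $x^n-1$ by it removes exactly those roots. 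Thus $c(\beta^j) = 0$ if and only if $\delta \nmid j$.

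The remaining task is to verify that $c(x)$ lies in the code, i.e. that $c(\beta^j) = 0$ for every $j$ in the defining set $T = \bigcup_{i=b}^{b+\delta-2} C_i$; equivalently, that $\delta \nmid j$ for every such $j$. For the coset representatives themselves, $b \equiv 1 \pmod{\delta}$ gives $i \equiv 1, 2, \ldots, \delta-1 \pmod{\delta}$ as $i$ ranges over $\{b, \ldots, b+\delta-2\}$, none of which is divisible by $\delta$. I would then propagate this through each $q$-cyclotomic coset: for an element $j = iq^k \bmod n$ of $C_i$, the fact that $\delta \mid n$ ensures $j \equiv iq^k \pmod{\delta}$, and since $\gcd(n,q)=1$ forces $\gcd(\delta, q) = 1$, the integer $q$ is invertible modulo $\delta$, so $\delta \mid iq^k$ if and only if $\delta \mid i$. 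Hence $\delta \nmid i$ yields $\delta \nmid j$ for the whole coset $C_i$, and $c$ vanishes on all of $T$.

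I expect the only genuine subtlety to be this last propagation step — guaranteeing that non-divisibility by $\delta$ survives both multiplication by powers of $q$ and reduction modulo $n$ — which is exactly where the two divisibility facts $\delta \mid n$ and $\gcd(\delta,q)=1$ are jointly used. Everything else is a direct verification, and combining the bound $d \geq \delta$ with the weight-$\delta$ codeword $c(x)$ completes the argument.
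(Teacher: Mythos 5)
The paper never proves Theorem \ref{thm-LLD}; it simply imports the statement from \cite{LiLiDing17cc}, so there is no internal proof to compare against. Your argument is correct and complete, and it is essentially the standard proof of this fact (the one underlying the cited reference): the BCH bound gives $d \geq \delta$, and the polynomial $c(x) = (x^n-1)/(x^{n/\delta}-1)$ of weight exactly $\delta$ is shown to lie in the code because every element of the defining set $T = \bigcup_{i=b}^{b+\delta-2} C_i$ is nonzero modulo $\delta$, which is precisely where the two hypotheses $\delta \mid n$ and $b \equiv 1 \pmod{\delta}$ enter. All the individual steps check out: $c(x)$ has degree $(\delta-1)n/\delta < n$ and weight $\delta$; its roots among the $n$-th roots of unity are exactly the $\beta^j$ with $\delta \nmid j$ (using that $x^n-1$ is squarefree since $\gcd(n,q)=1$); and your propagation through the cosets is valid. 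One small simplification is available for that last step: since $c(x)$ has coefficients in $\gf(q)$, one has $c(\beta^{jq}) = c(\beta^j)^q$, so vanishing at $\beta^i$ for the representatives $i = b, b+1, \ldots, b+\delta-2$ propagates automatically to the full $q$-cyclotomic cosets; this replaces the explicit residue argument modulo $\delta$ on coset elements (which is nonetheless correct as you wrote it) by a one-line Frobenius observation.
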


The following lemma was proved in \cite{AKS}.

\begin{lemma}\label{lem-AKS}
Let $n$ be a positive integer such that $q^{\lfloor m/2 \rfloor}<n \leq q^m-1$, where
$m=\ord_n(q)$. Then the $q$-cyclotomic coset $C_s=\{sq^j \bmod{n}: 0 \leq j \leq m-1\}$ has cardinality
$m$ for all $s$ in the range $1 \leq s \leq n q^{\lceil m/2 \rceil}/(q^m-1)$. In addition, every $s$ with
$s \not\equiv 0 \pmod{q}$ in this range is a coset leader.
\end{lemma}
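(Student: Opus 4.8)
The plan is to translate both assertions into statements about the multiplicative behaviour of $q$ modulo $n$ and then extract them from the hypothesis $q^{\lfloor m/2 \rfloor} < n \leq q^m-1$ together with the range bound. Throughout I write $N = nq^{\lceil m/2 \rceil}/(q^m-1)$ for the upper end of the range, and I use the basic fact that $|C_s|$ equals the least positive integer $\ell$ with $n \mid s(q^\ell-1)$, which is necessarily a divisor of $m=\ord_n(q)$.

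For the cardinality claim I would argue by contradiction. If $|C_s| = \ell < m$, then $\ell$ is a \emph{proper} divisor of $m$, hence $\ell \le \lfloor m/2\rfloor$, so $q^\ell - 1 \le q^{\lfloor m/2\rfloor}-1$. Since $n \mid s(q^\ell-1)$ and $s(q^\ell-1) > 0$, we must have $s(q^\ell-1) \ge n$. On the other hand, using $s \le N$,
\[ s(q^\ell-1) \le \frac{nq^{\lceil m/2\rceil}\,(q^{\lfloor m/2\rfloor}-1)}{q^m-1} = \frac{n\,(q^m - q^{\lceil m/2\rceil})}{q^m-1} < n, \]
because $q^{\lceil m/2\rceil} > 1$ forces the last fraction below $1$. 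This contradiction gives $|C_s| = m$.

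For the coset-leader claim I would fix $s$ with $q \nmid s$ and $s \le N$, assume it is \emph{not} the least element of $C_s$, and pick any $r \in C_s$ with $r < s$. Writing $r = sq^{j''} \bmod n$ and $s = rq^{j'} \bmod n$ with $j'+j'' = m$ and $1 \le j',j'' \le m-1$, I obtain $sq^{j''} = r + k''n$ and $rq^{j'} = s + kn$ for nonnegative integers $k,k''$. Here $k'' \ge 1$ (otherwise $sq^{j''} = r < s$, impossible), and $k \ge 1$ (otherwise $rq^{j'} = s$ would force $q \mid s$, using the hypothesis $q \nmid s$). Eliminating $r$, respectively $s$, between the two identities yields the two clean relations $s(q^m-1) = n(k + k''q^{j'})$ and $r(q^m-1) = n(k'' + kq^{j''})$.

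The endgame is an inequality squeeze, and the odd case is where I expect the real difficulty. From $r < N$ the second relation gives $k'' + kq^{j''} < q^{\lceil m/2\rceil}$, which with $k,k'' \ge 1$ forces $j'' \le \lceil m/2\rceil - 1$; from $sq^{j''} = r + k''n \ge n$ and $s \le N$ I get $q^{j''} \ge n/s \ge (q^m-1)/q^{\lceil m/2\rceil}$, hence $j'' \ge \lfloor m/2\rfloor$. When $m$ is even these bounds are incompatible and we are done immediately. When $m$ is odd they only pin down $j'' = (m-1)/2$ and therefore $j' = \lceil m/2\rceil$; the \textbf{main obstacle} is to close this remaining case, which I would do by feeding $j' = \lceil m/2\rceil$ back into $s \le N$, i.e.\ into $k + k''q^{j'} \le q^{\lceil m/2\rceil}$, and then invoking $k,k'' \ge 1$ to reach $1 + q^{\lceil m/2\rceil} \le q^{\lceil m/2\rceil}$, the desired contradiction. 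Hence no element of $C_s$ lies below $s$, so $s$ is a coset leader.
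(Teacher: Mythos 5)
You should know at the outset that the paper contains no proof of this lemma to compare against: it is imported verbatim from \cite{AKS} with the remark that ``the following lemma was proved in \cite{AKS}.'' So your proposal has to be judged on its own merits, and on those merits it is correct and complete. The cardinality argument is exactly right: a proper divisor $\ell$ of $m$ satisfies $\ell \le \lfloor m/2\rfloor$, and the chain $n \le s(q^\ell-1) \le \frac{nq^{\lceil m/2\rceil}(q^{\lfloor m/2\rfloor}-1)}{q^m-1} = \frac{n(q^m-q^{\lceil m/2\rceil})}{q^m-1} < n$ is a genuine contradiction. For the coset-leader part, your two relations $s(q^m-1)=n(k+k''q^{j'})$ and $r(q^m-1)=n(k''+kq^{j''})$ do follow by elimination, the hypothesis $q\nmid s$ is used precisely where it must be (to force $k\ge 1$), and the squeeze $\lfloor m/2\rfloor \le j'' \le \lceil m/2\rceil-1$ kills the even case, while your closing move for odd $m$ --- $j'=\lceil m/2\rceil$ together with $k,k''\ge 1$ gives $k+k''q^{j'}\ge 1+q^{\lceil m/2\rceil}$, contradicting $k+k''q^{j'}=s(q^m-1)/n\le q^{\lceil m/2\rceil}$ --- is valid. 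Two small points you glossed over are harmless but worth writing out: the step from $q^{j''}\ge (q^m-1)/q^{\lceil m/2\rceil}$ to $j''\ge\lfloor m/2\rfloor$ needs the integrality of $q^{j''}$ (the bound equals $q^{\lfloor m/2\rfloor}-q^{-\lceil m/2\rceil}$, which is not itself a power of $q$), and one should note $r\ge 1$, since $r=0$ would force $s\equiv 0\pmod n$, which is impossible in the given range. Structurally, your argument (contradiction via a positive multiple of $n$ smaller than $n$, plus inequality estimates extracted from the range bound) is the same elementary style as the original proof in \cite{AKS}, so this is best described as a faithful, self-contained reconstruction of the cited result rather than a new route.
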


\section{The first type of BCH binary codes and their related codes}\label{sec-type1} 

Let $s \geq 3$ be a positive integer and $n=(2^{2s}+1)(2^s-1)$.  Let $\alpha$ be a primitive element of $\gf(2^{4s})$ and $\beta=\alpha^{2^{s}+1}$. Then $\beta$ is an $n$-th primitive root of unity.  
Let $\m_{\beta}(x)$ denote the minimal polynomial of $\beta$ over $\gf(2)$.  Throughout this section, we fix the notation.  We will need the following lemma in this section. 

\begin{lemma}\label{lem-sec3} 
Recall that $C_i$ denotes the $2$-cyclotomic coset modulo $n$ containing $i$.  Let $s \geq 3$. Then the following hold. 
\begin{itemize} 
\item $\ord_n(2)=4s$. 
\item All the odd $i$ with $1 \leq i \leq 2^s-1$  are all coset leaders of the corresponding $2$-cyclotomic cosets $C_i$.  
\item $|C_i|=4s$ for all  the odd $i$ with $1 \leq i \leq 2^s-1$.  
\end{itemize}
\end{lemma}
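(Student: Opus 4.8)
The plan is to derive the last two bullets from Lemma~\ref{lem-AKS}, which simultaneously produces the coset sizes and the coset-leader property on an explicit range, once the first bullet $\ord_n(2)=4s$ has been established. Thus the real work lies in the order computation; after that, everything reduces to checking the hypotheses of Lemma~\ref{lem-AKS} and evaluating its range.

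First I would record the factorizations
\[
n=(2^{2s}+1)(2^s-1), \qquad 2^{4s}-1=(2^{2s}+1)(2^s-1)(2^s+1),
\]
so that $n\mid 2^{4s}-1$ and hence $\ord_n(2)\mid 4s$. Since $2^{2s}+1=(2^s-1)(2^s+1)+2$ and $2^s-1$ is odd, the factors $2^s-1$ and $2^{2s}+1$ are coprime, giving $\ord_n(2)=\lcm\big(\ord_{2^s-1}(2),\,\ord_{2^{2s}+1}(2)\big)$. Here $\ord_{2^s-1}(2)=s$ is immediate, so the crux is to show $\ord_{2^{2s}+1}(2)=4s$; granting this, $\ord_n(2)=\lcm(s,4s)=4s$.

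For the crux, set $N=2^{2s}+1$. From $2^{2s}\equiv-1\pmod N$ we get $N\nmid 2^{2s}-1$ (indeed $\gcd(N,2^{2s}-1)=1$), which eliminates every divisor $e\mid 2s$. Any remaining proper divisor $e\mid 4s$ with $e\nmid 2s$ must have the form $e=4s/k$ with $k>1$ odd (equivalently $v_2(e)=v_2(4s)$, forcing $k\mid s$), and I would bound $\gcd(N,2^{e}-1)$ via the standard identity $\gcd(2^a+1,2^b-1)=2^{\gcd(a,b)}+1$ valid when $b/\gcd(a,b)$ is even. Taking $a=2s$ and $b=4s/k$ yields $\gcd(a,b)=2s/k$ and $b/\gcd(a,b)=2$, so $\gcd(N,2^{e}-1)=2^{2s/k}+1<N$; hence $N\nmid 2^{e}-1$ and no such $e$ is the order. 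This forces $\ord_{2^{2s}+1}(2)=4s$, establishing the first bullet. I expect this gcd estimate to be the main obstacle, since it is the only place where a genuine arithmetic identity rather than pure bookkeeping is needed.

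Finally I would invoke Lemma~\ref{lem-AKS} with $q=2$ and $m=\ord_n(2)=4s$. The hypothesis $2^{2s}<n\le 2^{4s}-1$ holds because $n=2^{3s}-2^{2s}+2^s-1$ lies strictly between $2^{2s}$ and $2^{4s}-1$ for $s\ge 3$. The admissible range in the lemma is $1\le i\le n\,2^{2s}/(2^{4s}-1)$, and cancelling the common factor $(2^{2s}+1)(2^s-1)$ gives
\[
\frac{n\,2^{2s}}{2^{4s}-1}=\frac{2^{2s}}{2^s+1}=(2^s-1)+\frac{1}{2^s+1},
\]
whose integer part is exactly $2^s-1$. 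Hence every integer $i$ with $1\le i\le 2^s-1$ lies in the range, so each $C_i$ has cardinality $4s$ (in particular for the odd $i$, giving the third bullet), and every such $i$ with $i\not\equiv 0\pmod 2$, that is every odd $i$ in $[1,2^s-1]$, is a coset leader (giving the second bullet). This completes the deduction.
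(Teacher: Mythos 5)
Your proposal is correct, and for the second and third bullets it is exactly the paper's argument: invoke Lemma~\ref{lem-AKS} with $q=2$, $m=4s$, check $2^{2s}<n\le 2^{4s}-1$, and compute the admissible range $n\,2^{2s}/(2^{4s}-1)=2^{2s}/(2^s+1)$, whose integer part is $2^s-1$. The difference is in the first bullet: the paper does not prove $\ord_n(2)=4s$ at all, but simply cites \cite{XCY24} for it, whereas you supply a self-contained proof --- splitting $n$ into the coprime factors $2^s-1$ and $2^{2s}+1$, reducing to $\ord_{2^{2s}+1}(2)=4s$, killing the divisors of $2s$ via $2^{2s}\equiv-1\pmod{2^{2s}+1}$, and killing the divisors $e=4s/k$ ($k>1$ odd) via the identity $\gcd(2^a+1,2^b-1)=2^{\gcd(a,b)}+1$ when $b/\gcd(a,b)$ is even. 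I checked this gcd argument: with $a=2s$, $b=4s/k$, $k\mid s$ odd, one indeed gets $\gcd(a,b)=2s/k$ and $b/\gcd(a,b)=2$, so $\gcd(2^{2s}+1,2^e-1)=2^{2s/k}+1<2^{2s}+1$, as you claim. So your write-up proves strictly more than the paper's: the paper's version buys brevity by outsourcing the order computation to a reference, while yours makes the lemma independent of \cite{XCY24} at the cost of the (correct, standard) arithmetic with the gcd identity.
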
 

\begin{proof} 
The conclusion of the first part is known (see \cite{XCY24} for example).  
The conclusion of the remaining parts then follow from Lemma \ref{lem-AKS}.  
\end{proof}

\subsection{The parameters of the codes  $\C_{(2,n,3,1)}$ and their duals}

In this subsection, we study the codes  $\C_{(2,n,3,1)}$ and their duals and do some preparations for investigating the codes $\overline{\C_{(2,n,3, 1)}}$ in Section \ref{sec-222}. 

\begin{theorem}\label{thm-one1}
Let $s \geq 2$.  Then code $\C_{(2,n,  3,   1)}$  has parameters $[n, n-4s, 3]$ and $\C_{(2,n,  3,   1)}^\perp$ has parameters 
$[n, 4s, 2^{3s-1}-2^{2s-1}]$ and weight enumerator 
$$
1+(2^{4s}-1-n)z^{ 2^{3s-1}-2^{2s-1}}  +nz^{2^{3s-1}}. 
$$
\end{theorem}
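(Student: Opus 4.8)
The plan is to establish the dual code $\C_{(2,n,3,1)}^\perp$ first, since its weight enumerator is the real content, and then read off the parameters of $\C_{(2,n,3,1)}$ by duality together with the BCH bound. First I would determine the defining set. As $2 \in C_1$, the minimal polynomials of $\be$ and $\be^2$ coincide, so $g_{(2,n,3,1)}(x)=\m_{\be}(x)$ and the defining set of $\C_{(2,n,3,1)}$ is $C_1$. By Lemma~\ref{lem-sec3} for $s \ge 3$ (and by a direct check for $s=2$) we have $\ord_n(2)=4s$ and $|C_1|=4s$, whence $\dim \C_{(2,n,3,1)}=n-4s$ and $\dim \C_{(2,n,3,1)}^\perp=4s$. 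Since $c\in\C_{(2,n,3,1)}$ is equivalent to $\sum_{j}c_j\be^j=0$, every word $c_a:=(\Tr(a\be^j))_{0\le j<n}$ with $a\in\gf(2^{4s})$ lies in $\C_{(2,n,3,1)}^\perp$, because $\sum_j c_j\Tr(a\be^j)=\Tr\!\big(a\sum_j c_j\be^j\big)=0$. As $\deg\m_{\be}=4s$ forces $\be$ to generate $\gf(2^{4s})$ over $\gf(2)$, the map $a\mapsto c_a$ is injective, so $\C_{(2,n,3,1)}^\perp=\{c_a:a\in\gf(2^{4s})\}$ is the irreducible cyclic code attached to $\langle\be\rangle$.

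Next I would compute $\wt(c_a)$ for $a\neq 0$. Writing $\chi(y)=(-1)^{\Tr(y)}$ for the canonical additive character of $\gf(2^{4s})$, counting the zeros of $\Tr(a\be^j)$ gives $\wt(c_a)=\tfrac12\big(n-S(a)\big)$ with $S(a)=\sum_{x\in\langle\be\rangle}\chi(ax)$. Because $\be=\al^{2^s+1}$, the group $\langle\be\rangle$ is exactly the set of nonzero $(2^s+1)$-th powers, and the map $u\mapsto u^{2^s+1}$ is $(2^s+1)$-to-one from $\gf(2^{4s})^\ast$ onto $\langle\be\rangle$; hence $S(a)=\tfrac{1}{2^s+1}\big(T(a)-1\big)$ where $T(a)=\sum_{u\in\gf(2^{4s})}\chi(au^{2^s+1})$. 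Expanding the number of $(2^s+1)$-th roots as a sum over the multiplicative characters $\psi$ with $\psi^{2^s+1}=1$ turns this into a sum of Gauss sums, $T(a)=\sum_{\psi\neq 1}\overline{\psi(a)}\,g(\psi)$.

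The hard part will be evaluating these Gauss sums, and this is where the structure of $n$ is decisive. Since $2^s\equiv -1 \pmod{2^s+1}$, we also have $2^s\equiv -1\pmod d$ for every divisor $d>1$ of $2^s+1$; thus $2$ is semiprimitive modulo each such $d$. Writing $\ord_d(2)=2\ell_d$ one checks $\ell_d\mid s$ with $s/\ell_d$ odd, so that $4s=2\ell_d\cdot(2s/\ell_d)$ with $2s/\ell_d$ even. The Baumert--McEliece evaluation of semiprimitive Gauss sums in characteristic $2$ then gives $g(\psi)=-2^{2s}$ for \emph{every} nontrivial $\psi$ with $\psi^{2^s+1}=1$ (the sign is forced by the parity of $2s/\ell_d$, and the value $-2^{2s}$ can be cross-checked against $s=2$). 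Substituting this and the orthogonality relation $\sum_{\psi^{2^s+1}=1}\overline{\psi(a)}=(2^s+1)\mathbf 1[a\in\langle\be\rangle]$ yields $S(a)=-2^{2s}+2^s-1$ for $a\in\langle\be\rangle$ and $S(a)=2^s-1$ otherwise. Feeding these into $\wt(c_a)=\tfrac12(n-S(a))$ and simplifying with $n=2^{3s}-2^{2s}+2^s-1$ gives weight $2^{3s-1}$ for the $n$ values $a\in\langle\be\rangle$ and weight $2^{3s-1}-2^{2s-1}$ for the remaining $2^{4s}-1-n$ nonzero $a$. This is exactly the claimed weight enumerator, and the minimum distance of $\C_{(2,n,3,1)}^\perp$ is $2^{3s-1}-2^{2s-1}$.

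It remains to finish $\C_{(2,n,3,1)}$. Its dimension $n-4s$ is already in hand, and $d\ge 3$ follows from the BCH bound (Theorem~\ref{thm-BCHbound}) since $\{1,2\}\subseteq C_1$. To pin $d=3$ I would show that the number of weight-$3$ codewords is positive: either extract it from the now-known weight enumerator of $\C_{(2,n,3,1)}^\perp$ through the MacWilliams identity and check it is a positive polynomial in $2^s$, or equivalently show that some $x\in\langle\be\rangle\setminus\{1\}$ satisfies $1+x\in\langle\be\rangle$, which produces a relation $\be^i+\be^j+\be^k=0$ and hence a weight-$3$ codeword; the count of such $x$ is a cyclotomic number of order $2^s+1$ that is strictly positive in the semiprimitive regime. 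The single genuine obstacle is the uniform Gauss-sum value $-2^{2s}$ of the third paragraph; once it is secured, the weight enumerator, the two minimum distances, and the MacWilliams bookkeeping all follow mechanically.
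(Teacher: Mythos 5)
Your proposal is correct, and its skeleton is the paper's skeleton: dimension via Lemma \ref{lem-sec3}, identification of $\C_{(2,n,3,1)}^\perp$ as an irreducible cyclic code through a trace representation, its two-valued weight enumerator, then the BCH bound plus positivity of $A_3$ to pin $d=3$. The genuine divergence is the central step. The paper obtains the weight enumerator as a black box, citing Theorem 23 of the Ding--Yang survey on irreducible cyclic codes; you derive it from scratch by reducing $\wt(\bc_a)=\tfrac12\bigl(n-S(a)\bigr)$ to the Weil sum $T(a)=\sum_u\chi\bigl(au^{2^s+1}\bigr)$, expanding over multiplicative characters of order dividing $2^s+1$, and evaluating each nontrivial Gauss sum as $-2^{2s}$ via semiprimitivity. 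Your computation checks out, including the sign you flag as the ``single genuine obstacle'': since $s/\ell_d$ is odd, the extension degree $2s/\ell_d$ over the base semiprimitive field is even, so by Davenport--Hasse $g(\psi)=(-1)^{2s/\ell_d-1}\bigl(\pm 2^{\ell_d}\bigr)^{2s/\ell_d}=-2^{2s}$ regardless of the base-field sign; the step is in fact robust. The resulting values $S(a)\in\{-2^{2s}+2^s-1,\ 2^s-1\}$ give exactly the weights $2^{3s-1}$ (frequency $n$) and $2^{3s-1}-2^{2s-1}$ (frequency $2^{4s}-1-n$) of the statement. What your route buys is self-containedness and an explanation of why this $n$ forces a two-weight dual; what the paper's route buys is brevity, at the cost of importing a classification theorem that is itself proved by the same Gauss-sum machinery. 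Two loose ends on your side: you leave the positivity of $A_3$ as ``mechanical'' --- it is, and you should carry it out as the paper does via the fourth Pless power moment, which yields $A_3=(u-2)(u-1)(u^2+1)/6>0$ with $u=2^s$, rather than lean on cyclotomic numbers, which would require a further citation; and your explicit handling of $s=2$ (Lemma \ref{lem-sec3} is stated only for $s\geq 3$) is a point the paper itself glosses over, so keep it.
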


\begin{proof} 
By definition and Lemma \ref{lem-sec3},  $\C_{(2,n,  3,   1)}$ is the binary cyclic code of length $n$ with generator polynomial 
$\m_{\beta}(x)$. Then by Lemma \ref{lem-sec3} we have 
$$
\dim( \C_{(2,n,  3,   1)})=n-|C_1|=n-4s
$$ 
and 
$$
\dim( \C_{(2,n,  3,   1)}^\perp )=4s. 
$$
By definition, $\C_{(2,n,  3,   1)}^\perp$ is an irreducible cyclic code.

Let $\gamma=\beta^{-1}$.  Then $\gamma$ is an $n$-th primitive root of unity.   
Then the trace representation of $\C_{(2,n,  3,   1)}^\perp$ is given by 
\begin{eqnarray}
\C_{(2,n,  3,   1)}^\perp=\{\bc_a=(\tr(a\gamma^{i}))_{i=0}^{n-1}: a \in \gf(2^{4s})\}, 
\end{eqnarray}
where $\tr$ is the absolute trace function.  Then $\C_{(2,n,  3,   1)}^\perp$ is an irreducible cyclic code 
and its weight enumerator follows from Theorem 23 in \cite{DingYangSurv}.

It follows from the BCH bound that the minimum distance of $\C_{(2,n,  3,   1)}$ is at least $3$.  
By the fourth Pless power moment \cite{HuffmanPless03bk}, we have 
\begin{eqnarray*}
8\sum_{j=0}^n j^3 A_j( \C_{(2,n,  3,   1)}^\perp) =2^{4s}(n^2(n+3)-6A_3(\C_{(2,n,  3,   1)})). 
\end{eqnarray*} 
Solving the equation above yields that 
$$ 
A_3(\C_{(2,n,  3,   1)}) =\frac{(u-2)(u-1)(u^2+1)}{6}>0, 
$$
where $u=2^s$.  As a result,  $d(\C_{(2,n,  3,   1)})=3.$ 
\end{proof}

\begin{example}{\rm
Let $s=2$. Then the code $\C_{(2,n,  3,   1)}$ has parameters $[51,43,3]$ and has the same parameters 
as the optimal binary cyclic code with length and dimension $[51,43]$ \cite[Appendix A]{Dingbook15}. 
 The binary code $\C_{(2,n,  3,   1)}^\perp$ has parameters 
$[51,8,24]$ and weight enumerator 
$$
1+204z^{24} +51z^{32}. 
$$ 
In addition,  $\C_{(2,n,  3,   1)}^\perp$ has the same parameters 
as the optimal binary cyclic code with length $51$ and dimension $8$ \cite[Appendix A]{Dingbook15}.   }
\end{example}

\begin{example}{\rm
Let $s=3$. Then the code $\C_{(2,n,  3,   1)}$ has parameters $[455,443,3]$.  Furthermore,  $\C_{(2,n,  3,   1)}^\perp$ has parameters 
$[455,12,224]$ and weight enumerator 
$$
1+3640z^{224}+455z^{256}.
$$}
\end{example}

\subsection{The extended codes $ \overline{\C_{(2,n,  3,   1)}}$ of the codes  $\C_{(2,n,  3,   1)}$}\label{sec-222}

Our first infinite family of distance-optimal binary codes is given in the following theorem.

\begin{theorem}\label{thm-one2}
Let $s \geq 2$.  Then the extended code $\overline{\C_{(2,n,  3,   1)}}$  has parameters $[n+1, n-4s, 4]$
 and is distance-optimal with respect to the sphere packing bound. 
 Furthermore, 
 $\overline{\C_{(2,n,  3,   1)}}^\perp$ has parameters 
$[n+1, 4s+1,  n+1-2^{3s-1}]$ and weight enumerator 
\begin{eqnarray*}
1+nz^{n+1-2^{3s-1}} + (2^{4s}-1-n)z^{ 2^{3s-1}-2^{2s-1}}  +  \\
(2^{4s}-1-n)z^{ n+1-2^{3s-1}+2^{2s-1}}+nz^{2^{3s-1}} + z^{n+1}.
\end{eqnarray*} 
\end{theorem}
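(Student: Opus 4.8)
The plan is to derive all three assertions from Theorem \ref{thm-one1}, which already supplies the parameters and weight enumerators of $\C_{(2,n,3,1)}$ and its dual. First I would settle the parameters of the extended code. The dimension is preserved under the extension operation \eqref{eqn-extendedcodegeneral}, so it remains $n-4s$. For the minimum distance I would exploit that $\C_{(2,n,3,1)}$ has \emph{odd} minimum distance $3$: every weight-$3$ codeword is sent to a weight-$4$ codeword because its coordinate sum is nonzero, while no word of $\overline{\C_{(2,n,3,1)}}$ can have weight $3$, since such a word would project either to a weight-$3$ word of $\C_{(2,n,3,1)}$ with vanishing parity bit or to a weight-$2$ word with parity bit $1$, both impossible under the parity constraint $\sum_{i=1}^{n+1}c_i=0$. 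Hence $\overline{\C_{(2,n,3,1)}}$ has parameters $[n+1, n-4s, 4]$.

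For distance optimality I would argue against the sphere packing bound. If an $[n+1, n-4s, 5]$ binary code existed, then with $t=\lfloor (5-1)/2\rfloor = 2$ the bound would force
\[
1 + (n+1) + \binom{n+1}{2} \leq 2^{(n+1)-(n-4s)} = 2^{4s+1}.
\]
Writing $u = 2^s$ so that $n+1 = u(u^2-u+1)$, the left-hand side behaves like $u^6/2$ while the right-hand side is $2u^4$; a direct check shows this inequality is violated for all $u \geq 4$, i.e. for all $s \geq 2$. This rules out minimum distance $5$, so the distance-$4$ code is distance-optimal with respect to the sphere packing bound.

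The third and most substantive step is the dual. Here I would use the structural description of the dual of an extended code: with $H$ a parity-check matrix of $\C_{(2,n,3,1)}$, a parity-check matrix of $\overline{\C_{(2,n,3,1)}}$ is obtained by appending a zero column to $H$ and adjoining the all-ones row $\bone_{n+1}$. Consequently
\[
\overline{\C_{(2,n,3,1)}}^\perp = \{(v,0) : v \in \C_{(2,n,3,1)}^\perp\} \cup \{(v+\bone_n, 1): v \in \C_{(2,n,3,1)}^\perp\},
\]
which immediately yields dimension $4s+1$ and shows that each weight $w$ occurring in $\C_{(2,n,3,1)}^\perp$ contributes both a codeword of weight $w$ (from $(v,0)$) and a codeword of weight $n+1-w$ (from $(v+\bone_n,1)$). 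Feeding in the three weights $0$, $2^{3s-1}-2^{2s-1}$, $2^{3s-1}$ from Theorem \ref{thm-one1} together with their complements $n+1$, $n+1-2^{3s-1}+2^{2s-1}$, $n+1-2^{3s-1}$ produces exactly the claimed six-term weight enumerator. Finally I would read off the minimum distance by comparing the five nonzero exponents: substituting $u=2^s$ shows that $n+1-2^{3s-1}=u^3/2-u^2+u$ is the smallest of them for $s \geq 2$, which gives the stated minimum distance $n+1-2^{3s-1}$.

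The main obstacle, and the step needing the most care, is pinning down the dual structure exactly and tracking the bookkeeping of the weight complementation, since a single parity slip there would corrupt the whole six-term enumerator and the minimum-distance comparison; by contrast, the parameter and optimality claims are comparatively routine once Theorem \ref{thm-one1} is in hand.
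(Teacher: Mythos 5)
Your proposal is correct and follows essentially the same route as the paper's proof: the same odd-minimum-distance extension argument for $[n+1,n-4s,4]$, the same sphere-packing computation ruling out $d\geq 5$, and the same two-coset decomposition $\overline{\C_{(2,n,3,1)}}^\perp = \{(v,0): v \in \C_{(2,n,3,1)}^\perp\} \cup \{(v+\bone,1): v \in \C_{(2,n,3,1)}^\perp\}$ sending each dual weight $w$ to the pair $\{w,\, n+1-w\}$. The only cosmetic difference is the justification of that decomposition: you obtain it directly from the parity-check matrix of the extended code (valid for any binary code), while the paper first notes that $\C_{(2,n,3,1)}^\perp$ has only even weights, so that its extension $\overline{\C_{(2,n,3,1)}^\perp}$ is exactly $\{(v,0)\}$, and then adjoins the all-ones coset --- the bookkeeping and the resulting six-term enumerator and minimum distance $n+1-2^{3s-1}$ are identical.
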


\begin{proof} 
Since $3$ is the minimum distance of $\C_{(2,n,  3,   1)}$ and is odd, 
the parameters of $\overline{\C_{(2,n,  3,   1)}}$ follow from the parameters of 
$\C_{(2,n,  3,   1)}$ directly.  

We now prove the distance-optimality of the code $\overline{\C_{(2,n,  3,   1)}}$.  
Note that $s \geq 2$. We have 
\begin{eqnarray*}
\sum_{i=0}^2 \binom{n+1}{i} = \frac{n^2+3n+4}{2}  > 2^{4s+1}.  
\end{eqnarray*} 
By the sphere packing bound, there is no binary linear code with parameters $[n+1, n-4s, d\geq 5]$.

By Theorem \ref{thm-one1},  $\C_{(2,n,  3,   1)}^\perp$ has only even weights and
the following hold:
\begin{itemize}
\item The extended coordinate of each codeword in the extended code  $\overline{\C_{(2,n,  3,   1)}^\perp}$ is always 0. 
\item The code $\overline{\C_{(2,n,  3,   1)}^\perp}$ has parameters  
$[n+1, 4s, 2^{3s-1}-2^{2s-1}]$ and weight enumerator 
$$
1+(2^{4s}-1-n)z^{ 2^{3s-1}-2^{2s-1}}  +nz^{2^{3s-1}}. 
$$
\end{itemize} 
It is then easily seen that 
\begin{eqnarray*}
\overline{\C_{(2,n,  3,   1)}}^\perp= \overline{\C_{(2,n,  3,   1)}^\perp} \cup (\bone +  \overline{\C_{(2,n,  3,   1)}^\perp} ).  
\end{eqnarray*}
Then the desired weight enumerator and the parameters of the code  $\overline{\C_{(2,n,  3,   1)}}^\perp$  follow. 
\end{proof}

\subsection{The distance-optimal codes $\C_{(2,n,  6,   0)}$}

In this subsection, our second infinite family of distance-optimal binary codes is given in the following theorem.
\begin{theorem}\label{thm-one2-add}
    Let $s \geq 8$.  Then $\C_{(2,n, 6, 0)}$  has parameters $[n, n-8s-1, 6]$ and is distance-optimal with respect to the sphere packing bound.   
\end{theorem}
\begin{proof}
    By definition and Lemma \ref{lem-sec3},  $\C_{(2,n, 6, 0)}$ is the binary cyclic code of length $n$ with generator polynomial $\m_{\beta^0}(x)\m_{\beta}(x)\m_{\beta^3}(x)$. Thus, 
    $$\dim( \C_{(2,n,6, 0)})=n-|C_0|-|C_1|-|C_3|=n-8s-1.$$ 
    It follows from the BCH bound that $d(\C_{(2,n, 6, 0)}) \geq 6$.  When $s\geq 8$, it can be verified that 
    \begin{eqnarray*}
    \lefteqn{\sum_{i=0}^3 \binom{n}{i} -2^{8s+1}} \\
    &=& 1+n+n(n-1)/2+n(n-1)(n-2)/6 -2^{8s+1}\\
    &\geq & (2^{3s-1}\times2^{3s-1}\times2^{3s-1})/6 -2^{8s+1} \\
    &> & 2^{9s-6} -2^{8s+1}\\
    &>&0, 
    \end{eqnarray*}
    According to the sphere packing bound, we have $d(\C_{(2,n, 6, 0)}) \leq 6$.  Consequently, $d(\C_{(2,n, 6, 0)}) =6$ and code $\C_{(2,n, 6, 0)}$ is distance-optimal when $s\geq 8$.  
\end{proof}

\subsection{The codes $\C_{(2,n,  5,   1)}$  and their extended codes} 

In this subsection, we study the parameters of the code $\C_{(2,n,  5,   1)}$ and its extended code 
$ \overline{\C_{(2,n,  5,   1)}}$.  
Our third infinite family of distance-optimal binary codes is documented in the following theorem. 

\begin{theorem}\label{thm-s31}
Let $s \geq 4$.  Then $\C_{(2,n,  5,   1)}$  has parameters $[n, n-8s, 5 \leq d \leq 6]$.  The extended code 
$ \overline{\C_{(2,n,  5,   1)}}$ has parameters 
$[n+1, n-8s, 6]$ and is distance-optimal with respect to the sphere packing bound for $s\geq4$.   
\end{theorem}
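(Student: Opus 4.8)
The plan is to pin down the generator polynomial and dimension of $\C_{(2,n,5,1)}$, then let a single application of the sphere packing bound to the extended code do double duty: it will establish the distance-optimality of $\overline{\C_{(2,n,5,1)}}$ and, read backwards, force the upper bound $d(\C_{(2,n,5,1)}) \le 6$. First I would compute the generator polynomial. Over $\gf(2)$ the $2$-cyclotomic coset $C_1=\{2^j \bmod n : 0 \le j \le 4s-1\}$ contains both $2$ and $4$, so $\m_{\beta^2}(x)=\m_{\beta^4}(x)=\m_{\beta}(x)$. Since $s \ge 2$, the integer $3$ is odd with $1 \le 3 \le 2^s-1$, so Lemma~\ref{lem-sec3} gives that $3$ is a coset leader, $|C_3|=4s$, and $C_1 \cap C_3 = \emptyset$. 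Hence
$$
g_{(2,n,5,1)}(x)=\lcm\!\big(\m_{\beta}(x),\m_{\beta^2}(x),\m_{\beta^3}(x),\m_{\beta^4}(x)\big)=\m_{\beta}(x)\,\m_{\beta^3}(x),
$$
of degree $|C_1|+|C_3|=8s$. Therefore $\dim(\C_{(2,n,5,1)})=n-8s$, and its extension $\overline{\C_{(2,n,5,1)}}$ has length $n+1$ and the same dimension $n-8s$.

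Next I would bound the minimum distance from below. The defining set contains the four consecutive integers $\{1,2,3,4\}$, so the BCH bound (Theorem~\ref{thm-BCHbound}) gives $d(\C_{(2,n,5,1)}) \ge 5$. Consequently every nonzero codeword has weight at least $5$; appending the overall parity check turns each odd weight $w \ge 5$ into $w+1 \ge 6$ and leaves each even weight $w \ge 6$ unchanged, so every nonzero codeword of $\overline{\C_{(2,n,5,1)}}$ has weight at least $6$, i.e. $d(\overline{\C_{(2,n,5,1)}}) \ge 6$.

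The decisive step is the sphere packing bound for the extended code. Since $\overline{\C_{(2,n,5,1)}}$ is an $[n+1,\,n-8s,\,\overline{d}]$ code, if $\overline{d} \ge 7$ then $\sum_{i=0}^{3}\binom{n+1}{i} \le 2^{8s+1}$. I would show this inequality fails for $s \ge 4$, exactly as in the proof of Theorem~\ref{thm-one2-add}, forcing $\overline{d} \le 6$; combined with the previous paragraph this yields $\overline{d}=6$ and the distance-optimality of $\overline{\C_{(2,n,5,1)}}$. Finally, $\overline{d}=6$ feeds back to $\C_{(2,n,5,1)}$: if $d(\C_{(2,n,5,1)}) \ge 7$ held, then every nonzero extended codeword would have weight at least $8$, contradicting $\overline{d}=6$; hence $d(\C_{(2,n,5,1)}) \le 6$, which together with the BCH bound gives the asserted range $5 \le d \le 6$ and completes the proof.

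I expect the only genuine obstacle to be verifying $\sum_{i=0}^{3}\binom{n+1}{i} > 2^{8s+1}$ for \emph{every} $s \ge 4$. Because $n=2^{3s}-2^{2s}+2^s-1$ lies just below $2^{3s}$, the crude estimate $n > 2^{3s-1}$ loses a factor of $2^3$ and is too weak to reach the threshold; instead one must keep track of the ratio $n/2^{3s}=1-2^{-s}+2^{-2s}-2^{-3s}$, which is increasing in $s$ and already exceeds $0.94$ at $s=4$. With this, $\binom{n+1}{3}$ alone beats $2^{8s+1}$ at $s=4$ (a direct check gives $\binom{3856}{3} > 2^{33}$), and the gap only widens for larger $s$. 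Confirming that the threshold is exactly $s \ge 4$, rather than some larger value, is the crux of the argument; everything else is routine bookkeeping paralleling the proofs of Theorems~\ref{thm-one2} and~\ref{thm-one2-add}.
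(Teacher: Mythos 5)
Your proposal is correct, and it rests on the same three pillars as the paper's proof --- Lemma \ref{lem-sec3} for $\dim(\C_{(2,n,5,1)})=n-8s$, the BCH bound for $d \geq 5$, and the sphere packing bound for the upper bounds --- but it is organized more economically. The paper invokes the sphere packing bound twice: first on $\C_{(2,n,5,1)}$ itself, verifying $\sum_{i=0}^{3}\binom{n}{i} > 2^{8s}$ for $s \geq 4$ to force $d \leq 6$ (whence $d(\overline{\C_{(2,n,5,1)}})=6$ in either case $d=5$ or $d=6$), and then a second time on the extended code, verifying $\sum_{i=0}^{3}\binom{n+1}{i} > 2^{8s+1}$ to get distance-optimality. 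You invoke it only once, on the extended code, and recover $d \leq 6$ by the feedback observation that the parity extension never decreases weights, so $d(\overline{\C_{(2,n,5,1)}})=6$ forces $d \leq 6$; this is valid and eliminates one of the two inequality verifications, at no cost since both of the paper's inequalities hold in exactly the same range $s \geq 4$. On the single inequality you do need, your treatment is sound and in one respect more careful than your own pointer to Theorem \ref{thm-one2-add} suggests: the crude estimate $n > 2^{3s-1}$ used in that proof is indeed too weak here (it would only yield the threshold $s \geq 7$, which is essentially why that theorem requires $s \geq 8$), whereas your ratio argument $n/2^{3s} = 1-2^{-s}+2^{-2s}-2^{-3s} > 0.94$ for $s \geq 4$, anchored by the direct check $\binom{3856}{3} > 2^{33}$, correctly pins the threshold at $s \geq 4$. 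The paper achieves the same end differently, by computing $\sum_{i=0}^{3}\binom{n+1}{i} - 2^{8s+1} = (u-1)(u^2+1)(u^6-14u^5-9u^4-u^3-11u-6)/6$ exactly with $u=2^s$ and bounding the last factor through a chain of inequalities from $u \geq 16$; the two verifications are interchangeable, yours trading exact algebra for a transparent asymptotic estimate plus one numerical check.
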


\begin{proof}
By definition and Lemma \ref{lem-sec3},  $\C_{(2,n,  5,   1)}$ is the binary cyclic code of length $n$ with generator polynomial 
$\m_{\beta}(x) \m_{\beta^3}(x) $, where $\gcd(\m_{\beta}(x), \m_{\beta^3}(x))=1$. Then by Lemma \ref{lem-sec3} we have 
$$
\dim( \C_{(2,n,  5,   1)})=n-|C_1|-|C_3|=n-8s
$$ 
and 
$$
\dim( \C_{(2,n,  5,   1)}^\perp )=8s. 
$$
Consequently,  $ \dim(\overline{\C_{(2,n,  5,   1)}})=n-8s$.

It follows from the BCH bound that $d(\C_{(2,n,  5,   1)}) \geq 5$.  
When $s\geq 4$, it can be verified that 
\begin{eqnarray*}
\lefteqn{\sum_{i=0}^3 \binom{n}{i} -2^{8s}} \\
&=& u(u-1)(u^2+1)(u^5 - 8u^4 - 3u^3 - 4u^2 + 3u - 8)/6 \\
&\geq & (16u^4 - 8u^4 - 3u^3 - 4u^2 + 3u - 8)/6 \\
&=& (8u^4 - 3u^3 - 4u^2 + 3u - 8)/6 \\
&\geq & (128u^3 - 3u^3 - 4u^2 + 3u - 8)/6 \\
&= & (125u^3 - 4u^2 + 3u - 8)/6 \\
&>&0, 
\end{eqnarray*}
where $u=2^s \geq 16$.  
It then follows from the sphere packing bound that $d(\C_{(2,n,  5,   1)}) \leq 6$.  Consequently, 
$5 \leq d(\C_{(2,n,  5,   1)}) \leq 6$.  

In both cases that $d(\C_{(2,n,  5,   1)})=5$ and $d(\C_{(2,n,  5,   1)})=6$, we have $d( \overline{\C_{(2,n,  5,   1)}})=6$. Hence, the parameters of the code $ \overline{\C_{(2,n,  5,   1)}}$  follow. 
When $s \geq 4$, it can be verified that 
\begin{eqnarray*}
\lefteqn{ \sum_{i=0}^3 \binom{n+1}{i} -2^{8s+1}} \\
&=& (u-1)(u^2+1)(u^6 - 14u^5 - 9u^4 - u^3 - 11u - 6)/6 \\
&\geq& (u^6 - 14u^5 - 9u^4 - u^3 - 11u - 6)/6 \\
&\geq& (2u^5 - 9u^4 - u^3 - 11u - 6)/6 \\
&\geq& (23u^4 - u^3 - 11u - 6)/6 \\
&\geq& (735u^3 - 11u - 6)/6 \\
&>&0, 
\end{eqnarray*}
where $u=2^s \geq 16$.  
Hence,  the code $ \overline{\C_{(2,n,  5,   1)}}$ is distance-optimal with respect to the sphere packing bound 
for $s \geq 4$. 
\end{proof} 

\begin{conj}
The minimum distance of $\C_{(2,n,  5,   1)}$ is $5$ for $s \geq 2$. 
\end{conj} 

\begin{remark}
Notice that $n=(2^{2s}+1)(2^s-1)$.  We have $n \equiv 0 \pmod{5}$ when $s$ is odd or $s \equiv 0 \pmod{4}$. 
It then follows from Theorem \ref{thm-LLD} that $d(\C_{(2,n,  5,   1)}) = 5$ if $s$ is odd or $s \equiv 0 \pmod{4}$.  It would be good if this conjecture can be confirmed in the remaining case  $s \equiv 2 \pmod{4}$.  
\end{remark} 

\begin{example} {\rm
Let $s = 2$.  Then $\C_{(2,n,  5,   1)}$  has parameters $[51,35,5]$,  which are the parameters 
of the optimal binary cyclic code with length $51$ and dimension $35$ 
\cite[Appendix A]{Dingbook18}.  Its extended code 
$ \overline{\C_{(2,n,  5,   1)}}$ has parameters $[52,35,6]$, while the best known binary linear code 
has parameters $[52,35,7]$ \cite{G}. }
\end{example}

\begin{example} {\rm
Let $s=3$. Then $\C_{(2,n,  5,   1)}$  has parameters $[455,431,5]$ and its extended code 
$ \overline{\C_{(2,n,  5,   1)}}$ has parameters $[456,431,6]$. }
\end{example}

\section{The second type of BCH cyclic codes and their related codes}\label{sec-type2}  

Let $s\geq 2$ be an integer. Let $n=2^{2s}+2^s+1$.  Let $\alpha$ be a primitive element 
of $\gf(2^{3s})$ and $\beta=\alpha^{2^s-1}$.  Then $\beta$ is an $n$-th primitive root of unity in $\gf(2^{3s})$.  Let $\m_{\beta^i}(x)$ denote the minimal polynomial of $\beta^i$ over $\gf(2)$. We fix the notation throughout this section.  We will need the following lemma in this section. 

\begin{lemma}\label{lem-sec4} 
Recall that $C_i$ denotes the $2$-cyclotomic coset modulo $n$ containing $i$.  Let $s \geq 3$. Then the following hold. 
\begin{itemize} 
\item $\ord_n(2)=3s$. 
\item All the elements $i$ in $\{1,3\}$ are the coset leaders of the $2$-cyclotomic cosets $C_i$.  
\item $|C_i|=3s$ for all $i \in \{1,3\}$.  
\end{itemize}
\end{lemma}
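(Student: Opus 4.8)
The plan is to reduce all three parts to the single key identity $(2^s-1)(2^{2s}+2^s+1)=2^{3s}-1$, i.e. $n=(2^{3s}-1)/(2^s-1)$, and then to invoke Lemma \ref{lem-AKS} for the coset-size and coset-leader statements. First I would establish $\ord_n(2)=3s$. Since $n \mid 2^{3s}-1$ by the displayed identity, $d:=\ord_n(2)$ divides $3s$. To rule out every proper divisor at once I would use a size estimate: from $n \mid 2^{d}-1$ we get $2^{d}-1 \geq n = 2^{2s}+2^s+1$, whence $2^{d} > 2^{2s}$ and so $d > 2s$. But any divisor $d$ of $3s$ with $d>2s$ forces $3s/d$ to be a positive integer strictly below $3/2$, hence $3s/d=1$ and $d=3s$. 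This settles the first bullet; it also gives $|C_1|=3s$ immediately, since $|C_1|$ is by definition the multiplicative order of $2$ modulo $n$.

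For the coset size $|C_3|$ and the coset-leader claims I would appeal to Lemma \ref{lem-AKS} with $q=2$ and $m=\ord_n(2)=3s$. The two hypotheses $2^{\lfloor 3s/2\rfloor}<n\le 2^{3s}-1$ hold because $\lfloor 3s/2\rfloor \leq 2s-1$ gives $2^{\lfloor 3s/2\rfloor}\le 2^{2s-1}<n$, while $n=(2^{3s}-1)/(2^s-1)\le 2^{3s}-1$. Lemma \ref{lem-AKS} then guarantees $|C_t|=3s$ for every $t$ in the range $1\le t\le n\,2^{\lceil 3s/2\rceil}/(2^{3s}-1)$, with every odd such $t$ a coset leader. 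Using the identity once more, this upper bound simplifies to $2^{\lceil 3s/2\rceil}/(2^s-1)$, so it remains only to check that $3 \le 2^{\lceil 3s/2\rceil}/(2^s-1)$, equivalently $3(2^s-1)\le 2^{\lceil 3s/2\rceil}$, for $s\ge 3$. Since $1$ and $3$ are both odd and inside the range, the second and third bullets follow at once.

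The only genuinely computational step is this last inequality, and that is where the hypothesis $s\ge 3$ (and the ceiling, for odd $s$) is used: for even $s$ one has $\lceil 3s/2\rceil=3s/2$ and $3\cdot 2^s\le 2^{3s/2}\iff 3\le 2^{s/2}$, which holds for $s\ge 4$, while the case $s=3$ is checked directly ($3\cdot 7=21\le 2^5=32$). I expect the main subtlety to be conceptual rather than technical: for even $s$ one has $3\mid n$, so one might worry that $C_3$ collapses to a shorter coset. The point of routing the argument through Lemma \ref{lem-AKS} is precisely that the divisibility $3\mid n$ is irrelevant to its hypotheses, which only require $t\not\equiv 0 \pmod 2$; hence the uniform conclusion $|C_3|=3s$ with $3$ a coset leader holds in all cases, and no case split on the parity of $s$ is needed.
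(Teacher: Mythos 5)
Your proposal is correct. For the second and third bullets it follows the paper's own route --- both invoke Lemma \ref{lem-AKS} with $q=2$ and $m=3s$ --- but you supply the hypothesis-checking that the paper's one-line proof omits: the window $2^{\lfloor 3s/2\rfloor}<n\le 2^{3s}-1$, the simplification of the range bound to $2^{\lceil 3s/2\rceil}/(2^s-1)$ via the identity $n(2^s-1)=2^{3s}-1$, and the inequality $3(2^s-1)\le 2^{\lceil 3s/2\rceil}$ for $s\ge 3$. For the first bullet you genuinely diverge: the paper simply cites \cite{XCY24} for $\ord_n(2)=3s$, whereas you prove it directly ($n\mid 2^{3s}-1$ gives $d\mid 3s$; $n\mid 2^{d}-1$ forces $2^{d}>2^{2s}$, hence $d>2s$; and a divisor of $3s$ exceeding $2s$ must equal $3s$), which makes the lemma self-contained at essentially no cost. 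Two minor remarks. First, your write-up of the final inequality explicitly treats only even $s\ge 4$ and the single case $s=3$; this is not a gap, since the same chain $3(2^s-1)<3\cdot 2^s\le 2^{3s/2}\le 2^{\lceil 3s/2\rceil}$ is valid for every $s\ge 4$ regardless of parity (it needs only $3\le 2^{s/2}$), so odd $s\ge 5$ is covered by the identical computation, but it would be cleaner to state it that way. Second, your closing observation is correct and worth keeping: Lemma \ref{lem-AKS} requires only $t\not\equiv 0\pmod 2$, so the fact that $3\mid n$ exactly when $s$ is even never enters, and $|C_3|=3s$ with $3$ a coset leader holds uniformly.
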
 

\begin{proof} 
The conclusion of the first part is known (see \cite{XCY24} for example).  
The conclusions of the remaining parts then follow from Lemma \ref{lem-AKS}.  
\end{proof}

Our fourth infinite family of distance-optimal binary codes is described in the following theorem. 

\begin{theorem}\label{thm-two1}
The code $\C_{(2,n,3,1)}$ has parameters $[n, n-3s,3\leq d \leq 4]$. 
The extended code $\overline{\C_{(2,n,3,1)}}$ has parameters $[n+1, n-3s,4]$ and is distance-optimal with respect to the sphere packing bound for $s \geq 2$. 
\end{theorem}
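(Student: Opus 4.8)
The plan is to mirror the structure of the proof of Theorem \ref{thm-one2}, since the situation here is formally identical: we study a BCH code $\C_{(2,n,3,1)}$ with designed distance $3$, determine its dimension via the cyclotomic coset structure, bound its minimum distance between $3$ and $4$, and then extend it to obtain a distance-optimal code of minimum distance $4$. First I would establish the dimension. By definition, $\C_{(2,n,3,1)}$ has generator polynomial $\lcm(\m_{\beta}(x), \m_{\beta^2}(x))$; since we are working over $\gf(2)$ and $C_2 = C_1$ (as $2 \cdot 1 \equiv 2 \pmod n$ lies in the $2$-cyclotomic coset of $1$), this reduces to $\m_{\beta}(x)$ alone. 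By Lemma \ref{lem-sec4} we have $|C_1| = 3s$, so
$$
\dim(\C_{(2,n,3,1)}) = n - |C_1| = n - 3s,
$$
and hence $\dim(\overline{\C_{(2,n,3,1)}}) = n - 3s$ as well.

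Next I would pin down the minimum distance of $\C_{(2,n,3,1)}$. The BCH bound (Theorem \ref{thm-BCHbound}) immediately gives $d \geq 3$, since the defining set contains the two consecutive integers $\{1,2\}$. For the upper bound $d \leq 4$, I would invoke the sphere packing bound: I expect that for $s \geq 2$ (or possibly $s \geq 3$, matching the hypothesis of Lemma \ref{lem-sec4}) one can verify the inequality
$$
\sum_{i=0}^{2} \binom{n}{i} > 2^{3s},
$$
where $n = 2^{2s} + 2^s + 1$ so that $n^2 \approx 2^{4s}$ dominates the codimension $2^{3s}$. This rules out minimum distance $5$ (and already $d=5$ would contradict optimality), yielding $3 \leq d \leq 4$. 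Combined with the bound this establishes the claimed range $3 \leq d(\C_{(2,n,3,1)}) \leq 4$.

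For the extended code, the key point is that regardless of whether $d(\C_{(2,n,3,1)})$ equals $3$ or $4$, the extension produces a code of minimum distance exactly $4$: if $d = 3$, extending an odd-weight minimum codeword forces $\overline{d} = 4$, and if $d = 4$ then $\overline{d} \geq d = 4$ while the parity extension cannot increase an even weight, so $\overline{d} = 4$. This gives the parameters $[n+1, n-3s, 4]$. Finally, for distance-optimality with respect to the sphere packing bound, I would show
$$
\sum_{i=0}^{2} \binom{n+1}{i} > 2^{3s+1},
$$
which rules out any $[n+1, n-3s, d \geq 5]$ binary code; again the leading term $(n+1)^2/2 \approx 2^{4s-1}$ overwhelms $2^{3s+1}$ for all $s \geq 2$, so this should hold on the full stated range.

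The main obstacle I anticipate is not conceptual but bookkeeping: verifying the two sphere-packing inequalities cleanly for the smallest admissible values of $s$. The asymptotics are overwhelmingly favorable (quartic versus cubic growth in $u = 2^s$), so the inequalities hold comfortably for large $s$; the only risk is that the stated threshold $s \geq 2$ might be slightly tight at the boundary. I would handle this exactly as in Theorem \ref{thm-s31}, substituting $u = 2^s$ and chaining elementary lower bounds on the binomial sums until a manifestly positive polynomial in $u$ remains. One subtlety worth flagging is that Lemma \ref{lem-sec4} is stated only for $s \geq 3$, so the coset-size computation $|C_1| = 3s$ must be checked (or cited) separately for $s = 2$ to justify the full range $s \geq 2$ in the theorem statement.
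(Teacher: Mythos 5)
Your proposal is correct and follows essentially the same route as the paper's proof: BCH bound for $d \geq 3$, the sphere packing bound to force $d \leq 4$, the parity-extension argument showing $\overline{d}=4$ in both cases, and a second sphere-packing computation for distance-optimality of $\overline{\C_{(2,n,3,1)}}$; the paper simply carries out the two inequalities you deferred, writing them as $(u^2-u+4)(u^2+u+1)/2>0$ and $(u^2-3u+8)(u^2+u+1)/2>0$ with $u=2^s$, both manifestly positive, so your worry about tightness at $s=2$ evaporates. Your closing flag is well spotted: Lemma \ref{lem-sec4} is indeed stated only for $s \geq 3$ while the theorem claims $s \geq 2$, a small gap present in the paper itself, and checking $|C_1|=3s$ directly for $s=2$ (where $n=21$ and $\ord_{21}(2)=6$) is the right repair.
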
 

\begin{proof} 
It follows from the BCH bound that the minimum distance of $\C_{(2,n,  3,   1)}$ is at least $3$.  It can be easily verified that 
\begin{eqnarray*}
\sum_{i=0}^2 \binom{n}{i} -2^{3s} =(u^2 - u + 4)(u^2 + u + 1)/2>0, 
\end{eqnarray*} 
where $u=2^s$. 
It then follows from the sphere packing bound that $d(\C_{(2,n,  3,   1)})\leq 4.$ 
Hence, we have $3 \leq d(\C_{(2,n,  3,   1)})\leq 4.$

In both cases that $d(\C_{(2,n,  3,   1)}) =3$ and $d(\C_{(2,n,  3,   1)}) = 4$, 
we have $\overline{\C_{(2,n,  3,   1)}}=4$.  Then the desired parameters for $\overline{\C_{(2,n,  3,   1)}}$ follow.  It can be verified that 
\begin{eqnarray*}
\sum_{i=0}^2 \binom{n+1}{i} -2^{3s+1}  = (u^2 - 3u + 8) (u^2 + u + 1)/2 >0, 
\end{eqnarray*} 
where $u=2^s \geq 4$.  
It then follows from the sphere packing bound that $\overline{\C_{(2,n,  3,   1)}}$ is distance-optimal. 
\end{proof}

\begin{conj}
The minimum distance of $\C_{(2,n,  3,   1)}$ is $3$ for $s \geq 2$. 
\end{conj}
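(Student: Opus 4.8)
The plan is to show that $\C_{(2,n,3,1)}$ contains a codeword of Hamming weight $3$; since Theorem~\ref{thm-two1} already gives $d\geq 3$ via the BCH bound, this forces $d=3$. A weight-$3$ codeword is supported on three positions $i_1,i_2,i_3$ with $\beta^{i_1}+\beta^{i_2}+\beta^{i_3}=0$, and because the code is binary with defining set $C_1=\{1,2,4,\ldots\}$, the single equation $c(\beta)=0$ is the only constraint (the conditions at $\beta^{2^k}$ are its Frobenius images). Since $n=(2^{3s}-1)/(2^s-1)$, the $n$-th roots of unity $\{\beta^i\}$ are exactly the norm-one elements of $\gf(2^{3s})$ relative to the subfield $\gf(2^s)$, i.e.\ the kernel of $\Norm=\Norm_{\gf(2^{3s})/\gf(2^s)}$. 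Thus the task reduces to exhibiting three distinct norm-one elements summing to zero; after scaling by a norm-one element this is equivalent to finding a norm-one $a\notin\{0,1\}$ with $a+1$ also norm-one, for then $\{1,a,a+1\}$ are three distinct norm-one elements with $1+a+(a+1)=0$.

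To produce such an $a$, I would look among the roots of the one-parameter family of cubics
\begin{equation*}
f_p(x)=x^3+px^2+(p+1)x+1\in\gf(2^s)[x],\qquad p\in\gf(2^s).
\end{equation*}
The point of this shape is that if $a$ is a root of an \emph{irreducible} $f_p$, then $a\in\gf(2^{3s})\setminus\gf(2^s)$ (so $a\neq 0,1$), its three conjugates form a full Frobenius orbit, and the two relevant norms are read directly off the coefficients: the constant term $1$ gives $\Norm(a)=\prod_i a_i=1$, while $\Norm(a+1)=\prod_i(a_i+1)=f_p(1)=1+p+(p+1)+1=1$ in characteristic $2$. Hence any irreducible member of the family yields the desired norm-one pair $a,a+1$.

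It remains to guarantee that at least one $f_p$ is irreducible. A cubic over a field is irreducible precisely when it has no root in that field, so I would analyze when $f_p$ has a root $c\in\gf(2^s)$. One checks $f_p(0)=f_p(1)=1\neq 0$, so any root satisfies $c\neq 0,1$, and then $f_p(c)=0$ solves uniquely for the parameter, giving $p=(c^3+c+1)/(c^2+c)$. Consequently the set of bad parameters $\{p:f_p\text{ is reducible}\}$ is exactly the image of the map $c\mapsto(c^3+c+1)/(c^2+c)$ on $\gf(2^s)\setminus\{0,1\}$, a set of size at most $2^s-2<2^s$. Therefore some $p\in\gf(2^s)$ is not bad; the corresponding $f_p$ is irreducible, and its roots provide the weight-$3$ codeword, completing the argument for every $s\geq 2$.

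The individual computations are short, so the main obstacle is not algebraic but lies in choosing the right route. A reader who attempts the character-sum/power-moment method used for the dual in Theorem~\ref{thm-one1} will find it fails here: the number of admissible $a$ is governed by a two-variable character sum whose main term has order $2^{s}$ while the accompanying Gauss/Jacobi-sum error term has order $2^{3s/2}$, and since $\tfrac{3s}{2}>s$ this swamps the main term, rendering the trivial estimate vacuous. This is presumably why a direct counting proof looks delicate, and it is exactly what motivates the explicit cubic-pencil construction above, which sidesteps all character estimates. As a sanity check and a shortcut in half the cases, note that $3\mid n$ when $s$ is even, so Theorem~\ref{thm-LLD} already gives $d=3$ there; the cubic argument has the advantage of also covering the odd $s$, which is the genuinely open part.
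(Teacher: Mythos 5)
You should know at the outset that the paper contains no proof of this statement: it is posed as a conjecture, and the only progress the paper records is the remark immediately following it, which observes that $n=2^{2s}+2^s+1\equiv 1+(-1)^s+1 \pmod 3$, so $3\mid n$ exactly when $s$ is even, and then invokes Theorem~\ref{thm-LLD} (with $b=1$, so $\gcd(n,b-1)=n$); the odd-$s$ case is explicitly left open. Your argument is therefore not a variant of the paper's reasoning but a full resolution of the conjecture, and after checking each step I find it correct. The reduction is sound: since the defining set of $\C_{(2,n,3,1)}$ is the single coset $C_1$ and the constraint at $\beta^{2^k}$ is the $2^k$-th power of the one at $\beta$, weight-$3$ codewords correspond exactly to relations $\beta^{i_1}+\beta^{i_2}+\beta^{i_3}=0$; and since $\Norm_{\gf(2^{3s})/\gf(2^s)}(x)=x^{1+2^s+2^{2s}}=x^n$, the group $\langle\beta\rangle$ is precisely the norm-one subgroup. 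The cubic pencil does what you claim: if $f_p(x)=x^3+px^2+(p+1)x+1$ is irreducible over $\gf(2^s)$ with root $a$, then $a\notin\gf(2^s)$ (so $a\neq 0,1$), $\Norm(a)$ equals the constant term $1$, and $\Norm(a+1)=f_p(1)=1$ in characteristic $2$, so $1+\beta^i+\beta^j=0$ with $a=\beta^i$, $a+1=\beta^j$ gives a weight-$3$ codeword; combined with the BCH bound $d\geq 3$ (already in Theorem~\ref{thm-two1}) this yields $d=3$. The counting step is also right: a reducible cubic must have a root $c\in\gf(2^s)$, necessarily $c\notin\{0,1\}$ because $f_p(0)=f_p(1)=1$, and such a root forces $p=(c^3+c+1)/(c^2+c)$, so at most $2^s-2<2^s$ parameters are bad. (As a concrete check for $s=2$: the bad set is $\{\omega,\omega^2\}\subset\gf(4)$, and $f_0=x^3+x+1$ is indeed irreducible over $\gf(4)$, recovering a weight-$3$ word in the $[21,15,3]$ code.) As to what each approach buys: the paper's divisibility trick is a one-line argument but is structurally incapable of reaching odd $s$, where $3\nmid n$; your explicit construction is uniform in $s$ and upgrades the conjecture to a theorem. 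Your side remark about why a counting proof in the style of Theorem~\ref{thm-one1} fails (Jacobi-sum error of order $2^{3s/2}$ versus a main term of order $2^s$) is accurate and explains why the question was left open. If you write this up, do note that the method does not transfer to the paper's two remaining conjectures, which concern $\delta=5$: there a minimum-weight codeword must satisfy evaluation constraints at both $\beta$ and $\beta^3$, and a single pencil of cubics no longer suffices.
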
 

\begin{remark}
Notice that $n=2^{2s}+2^s+1$.  We have $n \equiv 0 \pmod{3}$ when $s$ is even.  
It then follows from Theorem \ref{thm-LLD} that $d(\C_{(2,n,  3,   1)}) = 3$ if $s$ is even. 
It would be good if this conjecture can be confirmed in the odd $s$ case.  
\end{remark} 

\begin{example} {\rm
Let $s = 2$.  Then $\C_{(2,n,  3,   1)}$  has parameters $[21,15,3]$,  while the optimal binary cyclic code 
has parameters $[21,15,4]$ \cite[Appendix A]{Dingbook18}.  Its extended code 
$ \overline{\C_{(2,n,  3,   1)}}$ has parameters $[22,15,4]$. }
\end{example}

\begin{example} {\rm
Let $s=3$.  Then $\C_{(2,n,  3,   1)}$  has parameters $[73,64,3]$ and is a optimal binary cyclic code 
\cite[Appendix A]{Dingbook18}. 
The extended code 
$ \overline{\C_{(2,n,  3,   1)}}$ has parameters $[74,64,4]$. }
\end{example}

\section{The third type of binary BCH codes and their related codes}\label{sec-type3}  

Let $n=(4^s-1)/3$ and $s \geq 4$. Let $\alpha$ be a primitive element of $\gf(2^{2s})$ and 
$\beta=\alpha^3$.  Let $\m_{\beta^i}(x)$ denote the 
minimal polynomial of $\beta^i$ over $\gf(2)$. We fix the notation throughout this section.   
We will need the following lemma in this section. 

\begin{lemma}\label{lem-sec5} 
Recall that $C_i$ denotes the $2$-cyclotomic coset modulo $n$ containing $i$.  Let $s \geq 4$. Then the following hold. 
\begin{itemize} 
\item $\ord_n(2)=2s$. 
\item All the odd $i$ with $1 \leq i \leq 2^s/3$ are all coset leaders of the $2$-cyclotomic cosets $C_i$.  
\item $|C_i|=2s$ for all odd $i$ with $1 \leq i \leq 2^s/3$.  
\end{itemize}
\end{lemma}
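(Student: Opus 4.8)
The plan is to follow the same two-step template used for Lemma \ref{lem-sec3} and Lemma \ref{lem-sec4}: first pin down the multiplicative order $\ord_n(2)$, and then read off the cardinality and coset-leader statements directly from Lemma \ref{lem-AKS}. Since $n=(4^s-1)/3=(2^{2s}-1)/3$ divides $2^{2s}-1$, we immediately get $2^{2s}\equiv 1 \pmod n$, so $\ord_n(2)$ is a divisor of $2s$; the only work for the first bullet is to rule out every proper divisor.

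For the order, I would argue as follows. Let $d=\ord_n(2)$, so $d \mid 2s$. If $d<2s$, then since $2s$ is even its largest proper divisor is $s$, whence $d \leq s$. But $2^d \equiv 1 \pmod n$ forces $n \mid 2^d-1$, and writing $u=2^s$ one checks $n=(u^2-1)/3 > u-1 \geq 2^d-1$, because $u^2-1>3(u-1)$ is equivalent to $(u-1)(u-2)>0$, which holds for $u=2^s \geq 4$. A positive integer cannot divide a smaller positive integer, so $n \nmid 2^d-1$, a contradiction. Hence $d=2s$. (Alternatively, one may simply cite \cite{XCY24}, exactly as in the proofs of Lemma \ref{lem-sec3} and Lemma \ref{lem-sec4}.)

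With $m=\ord_n(2)=2s$ in hand, I would invoke Lemma \ref{lem-AKS} with $q=2$. First verify its hypothesis $2^{\lfloor m/2\rfloor}<n \leq 2^m-1$: the upper bound is clear since $n=(2^{2s}-1)/3$, and the lower bound $2^s<(2^{2s}-1)/3$ is again $(u-1)(u-2)>0$ for $u=2^s$. Then the coset-leader range supplied by the lemma is
$$
1 \leq i \leq \frac{n\,2^{\lceil m/2\rceil}}{2^m-1}=\frac{(2^{2s}-1)\,2^{s}}{3(2^{2s}-1)}=\frac{2^s}{3},
$$
which is exactly the range appearing in the statement. Lemma \ref{lem-AKS} therefore gives $|C_i|=m=2s$ for every $i$ with $1 \leq i \leq 2^s/3$, and that every such $i$ with $i \not\equiv 0 \pmod 2$, i.e. every odd $i$, is a coset leader. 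This yields the second and third bullet points verbatim.

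I do not expect a genuine obstacle here: once the order is fixed, the statement is an immediate specialization of Lemma \ref{lem-AKS}, and the only quantitative fact needed throughout is the elementary inequality $(2^s-1)(2^s-2)>0$. The one point demanding a little care is confirming that the upper endpoint of the AKS range collapses to exactly $2^s/3$ rather than some nearby value; this hinges on $\lceil 2s/2\rceil=s$ and on the cancellation of the factor $2^{2s}-1$, which I would spell out explicitly.
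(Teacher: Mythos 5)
Your proof is correct and takes essentially the same route as the paper, whose entire argument is to cite \cite{XCY24} for $\ord_n(2)=2s$ and then apply Lemma \ref{lem-AKS}; you simply make explicit the order computation and the verification of the hypothesis $2^{\lfloor m/2\rfloor}<n\leq 2^m-1$ that the paper leaves to the reader. One harmless slip: the lower bound $2^s<(2^{2s}-1)/3$ is equivalent to $u^2-3u-1>0$, i.e. $(u-1)(u-2)>3$, not to $(u-1)(u-2)>0$, though both certainly hold for $u=2^s\geq 16$.
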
 

\begin{proof} 
The conclusion of the first part is known (see \cite{XCY24} for example).  
The conclusions of the remaining parts then follow from Lemma \ref{lem-AKS}.  
\end{proof}

\subsection{The parameters of the codes  $\C_{(2,n,  3,   1)}$ and their duals}

In this subsection, we study the codes  $\C_{(2,n,  3,   1)}$ and their duals and have the following results.  
This is a preparation of treating the codes $ \overline{\C_{(2,n,  3,   1)}}$ in Section \ref{sec-223}. 

\begin{theorem}\label{thm-five1}
Let $s \geq 4$.  Then code $\C_{(2,n,  3,   1)}$  has parameters $[n, n-2s, 3]$ and $\C_{(2,n,  3,   1)}^\perp$ has parameters 
$[n, 2s, d]$ and weight enumerator 
$$
1+nz^{\frac{2^{2s-1}+(-1)^s 2^s}{3}}  +2nz^{\frac{2^{2s-1}-(-1)^s 2^{s-1}}{3}}, 
$$
where 
$$
d=\min\left\{ \frac{2^{2s-1}+(-1)^s 2^s}{3},  \frac{2^{2s-1}-(-1)^s 2^{s-1}}{3}\right\} 
$$
\end{theorem}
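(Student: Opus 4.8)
The theorem has three parts: (i) the parameters $[n, n-2s, 3]$ of $\C_{(2,n,3,1)}$, (ii) the dimension $2s$ of the dual, and (iii) the full weight enumerator of the dual, from which the distance $d$ is read off. The structure should parallel the proof of Theorem~\ref{thm-one1} exactly, since that theorem is the analogous statement for the first type of code. First I would use Lemma~\ref{lem-sec5}: since $1$ is an odd coset leader with $|C_1|=2s$, the generator polynomial of $\C_{(2,n,3,1)}$ is $\m_\beta(x)$, of degree $2s$, giving $\dim(\C_{(2,n,3,1)})=n-2s$ and $\dim(\C_{(2,n,3,1)}^\perp)=2s$. The BCH bound gives $d(\C_{(2,n,3,1)})\geq 3$, and to pin it to exactly $3$ I would either invoke Theorem~\ref{thm-LLD} (noting $3\mid n$ and $3\mid\gcd(n,b-1)=\gcd(n,0)=n$, so the hypothesis $\delta\mid\gcd(n,b-1)$ holds with $\delta=3$) or, as in Theorem~\ref{thm-one1}, use a Pless power moment to show $A_3>0$. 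The cleanest route is Theorem~\ref{thm-LLD}, which immediately yields $d=3$.

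**The weight enumerator.** The main content is part (iii). As in Theorem~\ref{thm-one1}, the dual $\C_{(2,n,3,1)}^\perp$ is an irreducible cyclic code with trace representation
\begin{eqnarray*}
\C_{(2,n,3,1)}^\perp=\{\bc_a=(\tr(a\gamma^{i}))_{i=0}^{n-1}: a \in \gf(2^{2s})\},
\end{eqnarray*}
where $\gamma=\beta^{-1}$ is a primitive $n$-th root of unity and $\tr$ is the trace from $\gf(2^{2s})$ to $\gf(2)$. The weights of such a code are governed by Gaussian-period / cyclotomic machinery. Here the key arithmetic fact is $n=(2^{2s}-1)/3$, so the nonzero elements $\gamma^i$ run over the subgroup of index $3$ in $\gf(2^{2s})^*$, i.e. the set of cubes. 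The weight distribution of this irreducible cyclic code is then the classical semiprimitive (two-weight) case, since $2$ has order $2s$ modulo $n$ and one checks that $-1$ is a power of $2$ modulo the relevant index-$3$ subgroup structure. I would invoke the survey result cited in Theorem~\ref{thm-one1} (Theorem~23 in \cite{DingYangSurv}) to read off that there are exactly two nonzero weights, with the stated multiplicities $n$ and $2n$ and values $\tfrac{2^{2s-1}+(-1)^s 2^s}{3}$ and $\tfrac{2^{2s-1}-(-1)^s 2^{s-1}}{3}$; these are exactly the Gauss-period expressions for index-$3$ (semiprimitive) irreducible cyclic codes, and the sign $(-1)^s$ is the usual parity factor arising from whether $s$ is even or odd.

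**The main obstacle.** The delicate step is verifying that the $(\gf(2^{2s}),n)$-data here fall into the semiprimitive regime that Theorem~23 of \cite{DingYangSurv} covers, and matching the signs. Concretely I must confirm that the multiplier group generated by $2$ together with $-1=1$ (characteristic $2$) acts on the index-$3$ cyclotomic classes so that the resulting Gauss sums are the rational ``semiprimitive'' values rather than genuinely complex ones; the appearance of $(-1)^s$ and the factor $3$ in the denominators is precisely the fingerprint of this case. Once the case identification is made, the two weight values and their frequencies follow by substituting the Gauss-sum evaluations into the standard formula, and $d$ is just the smaller of the two weights, exactly as displayed. Finally I would include a consistency check — verifying $n\cdot w_1+2n\cdot w_2$ against the first Pless moment, or that $w_1$ and $w_2$ are integers for both parities of $s$ — to confirm the formula is stated correctly.
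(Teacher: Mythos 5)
Your route to $d(\C_{(2,n,3,1)})=3$ contains a genuine error: the invocation of Theorem \ref{thm-LLD} rests on the claim that $3\mid n$, which is false for most $s$. Indeed, $n=(4^s-1)/3=1+4+4^2+\cdots+4^{s-1}\equiv s \pmod{3}$, so $3\mid n$ if and only if $3\mid s$; for example, $s=4$ gives $n=85$ and $s=5$ gives $n=341$, neither divisible by $3$. Since $b=1$ makes $\gcd(n,b-1)=\gcd(n,0)=n$, the hypothesis of Theorem \ref{thm-LLD} with $\delta=3$ is precisely $3\mid n$, so your ``cleanest route'' proves $d=3$ only for $s\equiv 0\pmod{3}$ and cannot establish the theorem for all $s\geq 4$. (Compare the remarks the paper makes after Theorems \ref{thm-two1} and \ref{thm-s61}, where Theorem \ref{thm-LLD} is applied only under an explicit congruence condition on $s$, with the remaining case left as a conjecture.)

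Your fallback --- the fourth Pless power moment, as in Theorem \ref{thm-one1} --- is the argument the paper actually uses, but you leave it unexecuted, and it is not automatic. Since the BCH bound gives $A_1(\C_{(2,n,3,1)})=A_2(\C_{(2,n,3,1)})=0$, the moment identity reads
$$
8\sum_{j=0}^n j^3A_j\big(\C_{(2,n,3,1)}^\perp\big)=2^{2s}\left(n^2(n+3)-6A_3(\C_{(2,n,3,1)})\right),
$$
so $A_3>0$ is equivalent to the strict inequality
$$
8\left[n\left(\frac{2^{2s-1}+(-1)^s2^s}{3}\right)^3+2n\left(\frac{2^{2s-1}-(-1)^s2^{s-1}}{3}\right)^3\right]<(3n+1)\,n^2(n+3),
$$
which the paper verifies by factoring the difference of the two sides separately for $s$ odd and $s$ even. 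This computation must actually be carried out for the proof to be complete. The rest of your proposal --- the dimensions from Lemma \ref{lem-sec5}, the trace representation of the dual, and its identification as a semiprimitive index-$3$ irreducible cyclic code whose two nonzero weights with frequencies $n$ and $2n$ are read off from Theorem 23 of \cite{DingYangSurv} --- coincides with the paper's proof.
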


\begin{proof} 
By definition and Lemma \ref{lem-sec5},  $\C_{(2,n,  3,   1)}$ is the binary cyclic code of length $n$ with generator polynomial 
$\m_{\beta}(x)$. Then by Lemma \ref{lem-sec5} we have 
$$
\dim( \C_{(2,n,  3,   1)})=n-|C_1|=n-2s
$$ 
and 
$$
\dim( \C_{(2,n,  3,   1)}^\perp )=2s. 
$$
By definition, $\C_{(2,n,  3,   1)}^\perp$ is an irreducible cyclic code.

Let $\gamma=\beta^{-1}$.  Then $\gamma$ is an $n$-th primitive root of unity.   
Then the trace representation of $\C_{(2,n,  3,   1)}^\perp$ is given by 
\begin{eqnarray}
\C_{(2,n,  3,   1)}^\perp=\{\bc_a=(\tr(a\gamma^{i}))_{i=0}^{n-1}: a \in \gf(2^{2s})\}, 
\end{eqnarray}
where $\tr$ is the absolute trace function.  Then $\C_{(2,n,  3,   1)}^\perp$ is an irreducible cyclic code 
and its weight enumerator follows from Theorem 23 in \cite{DingYangSurv}.  

Note that $3n+1=2^{2s}$. 
It can be verified that 
\begin{eqnarray*}
\lefteqn{8\sum_{j=0}^n j^3 A_j(\C_{(2,n,  3,   1)}^\perp)} \\ 
&=& 8 \left[ n \left(\frac{2^{2s-1}+(-1)^s 2^s}{3} \right)^3 + 2n \left(\frac{2^{2s-1}-(-1)^s 2^{s-1}}{3} \right)^3    \right] \\
&<& (3n+1)n^2(n+3). 
\end{eqnarray*} 
In fact,  when $s$ is odd, we have 
\begin{eqnarray*}
&& (3n+1)n^2(n+3) - 8 \left[ n \left(\frac{2^{2s-1}+(-1)^s 2^s}{3} \right)^3 + 2n \left(\frac{2^{2s-1}-(-1)^s 2^{s-1}}{3} \right)^3    \right]  \\
&=& (u^6 + 2u^5 - 9u^4 - 2u^3 + 8u^2 )/27 \\
&=& (u^2(u-2) (u-1) (u+1) (u+4) )/27\\
&>&0, 
\end{eqnarray*}
where $u=3^s\geq 27$. 
When $s$ is even, we have 
\begin{eqnarray*}
&& (3n+1)n^2(n+3) - 8 \left[ n \left(\frac{2^{2s-1}+(-1)^s 2^s}{3} \right)^3 + 2n \left(\frac{2^{2s-1}-(-1)^s 2^{s-1}}{3} \right)^3    \right]  \\
&=& (u^6 - 2u^5 - 9u^4 + 2u^3 + 8u^2) /27\\
&=& (u^2(u-4)(u-1)(u+1)(u+2))/27\\
&>&0, 
\end{eqnarray*}
where $u=3^s\geq 27.$. 

By the BCH bound, we have $d(\C_{(2,n,  3,   1)}) \geq 3.$ 
By the fourth Pless power moment,  \cite{HuffmanPless03bk},  we have then 
\begin{eqnarray*}
\lefteqn{8\sum_{j=0}^n j^3 A_j(\C_{(2,n,  3,   1)}^\perp)} \\ 
&=& 2^{2s} (n^2(n+3)-6A_3(\C_{(2,n,  3,   1)})) \\ 
&=& (3n+1) (n^2(n+3)-6A_3(\C_{(2,n,  3,   1)})). 
\end{eqnarray*} 
Thus, $A_3(\C_{(2,n,  3,   1)} )>0$.  Summarizing the discussions above, we conclude that 
$d(\C_{(2,n,  3,   1)} )=3$. 
\end{proof}

\begin{example}{\rm
Let $s=3$. Then the code $\C_{(2,n,  3,   1)}$ has parameters $[21,15,3]$ and has the same parameters 
as the optimal binary cyclic code with length and dimension $[21,15]$ \cite[Appendix A]{Dingbook15}. 
 The binary code $\C_{(2,n,  3,   1)}^\perp$ has parameters 
$[21,6,8]$ and weight enumerator 
$$
1+21z^{8} +42z^{12}. 
$$ 
In addition,  $\C_{(2,n,  3,   1)}^\perp$ has the same parameters 
as the optimal binary cyclic code with length $21$ and dimension $6$ \cite[Appendix A]{Dingbook15}.   }
\end{example}

\begin{example}{\rm
Let $s=4$. Then the code $\C_{(2,n,  3,   1)}$ has parameters $[85,77,3]$, which are the parameters of 
the optimal binary cyclic code with length and dimension $[85,77]$ \cite[Appendix A]{Dingbook15}.   Furthermore,  $\C_{(2,n,  3,   1)}^\perp$ has parameters 
$[85,8,40]$ and weight enumerator 
$$
1+170z^{40}+85z^{48}.
$$}
\end{example}

\subsection{The extended codes $ \overline{\C_{(2,n,  3,   1)}}$ of the codes  $\C_{(2,n,  3,   1)}$}\label{sec-223}

Our fifth infinite family of distance-optimal binary codes is given in the following theorem. 

\begin{theorem}\label{thm-five2}
Let $s \geq 4$.  Then the extended code $\overline{\C_{(2,n,  3,   1)}}$  has parameters $[n+1, n-2s, 4]$
 and is distance-optimal with respect to the sphere packing bound. 
 Furthermore, 
 $\overline{\C_{(2,n,  3,   1)}}^\perp$ has parameters 
$[n+1, 2s+1]$ and weight enumerator 
\begin{eqnarray*}
1+nz^{\frac{2^{2s-1}+(-1)^s 2^s}{3}}  +2nz^{\frac{2^{2s-1}-(-1)^s 2^{s-1}}{3}} \\
+nz^{n+1-\frac{2^{2s-1}+(-1)^s 2^s}{3}}  +2nz^{n+1-\frac{2^{2s-1}-(-1)^s 2^{s-1}}{3}} 
+ z^{n+1}.
\end{eqnarray*} 
\end{theorem}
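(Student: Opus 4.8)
The plan is to follow the template already established in the proof of Theorem \ref{thm-one2}, since the present situation is structurally identical once Theorem \ref{thm-five1} is available. First I would record the parameters of the extended code. By Theorem \ref{thm-five1} the code $\C_{(2,n,3,1)}$ has parameters $[n,n-2s,3]$, and its minimum distance $3$ is odd; appending the overall parity-check coordinate therefore raises the minimum distance to $4$ while adding one to the length and preserving the dimension, so $\overline{\C_{(2,n,3,1)}}$ has parameters $[n+1,n-2s,4]$.

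Next I would establish distance-optimality through the sphere packing bound. The redundancy of $\overline{\C_{(2,n,3,1)}}$ equals $(n+1)-(n-2s)=2s+1$, so any putative $[n+1,n-2s,d]$ code with $d\geq 5$ would have to satisfy $\sum_{i=0}^{2}\binom{n+1}{i}\leq 2^{2s+1}$. Writing $u=2^s$ and using $n=(u^2-1)/3$, a direct expansion gives
\begin{eqnarray*}
\sum_{i=0}^{2}\binom{n+1}{i}-2^{2s+1}=\frac{u^4-29u^2+28}{18}=\frac{(u^2-1)(u^2-28)}{18},
\end{eqnarray*}
which is strictly positive for $u=2^s\geq 16$, i.e.\ for $s\geq 4$. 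This contradicts the sphere packing bound, so no binary $[n+1,n-2s,\geq 5]$ code exists and $\overline{\C_{(2,n,3,1)}}$ is distance-optimal.

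Finally I would compute the dual. The crux is that $\C_{(2,n,3,1)}^\perp$ has only even weights: indeed $\bone\in\C_{(2,n,3,1)}$, because the polynomial $1+x+\cdots+x^{n-1}=(x^n-1)/(x-1)$ vanishes at every $\beta^i$ with $i\in C_1$ (all such $i$ being nonzero), so it is a multiple of the generator $\m_{\beta}(x)$; equivalently both nonzero dual weights $\tfrac{2^{2s-1}+(-1)^s2^s}{3}$ and $\tfrac{2^{2s-1}-(-1)^s2^{s-1}}{3}$ from Theorem \ref{thm-five1} are even. Consequently the appended coordinate of every codeword of $\overline{\C_{(2,n,3,1)}^\perp}$ is $0$, so $\overline{\C_{(2,n,3,1)}^\perp}$ is an $[n+1,2s]$ code with the same weight enumerator as $\C_{(2,n,3,1)}^\perp$. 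Using the standard decomposition
\begin{eqnarray*}
\overline{\C_{(2,n,3,1)}}^\perp=\overline{\C_{(2,n,3,1)}^\perp}\cup\big(\bone+\overline{\C_{(2,n,3,1)}^\perp}\big),
\end{eqnarray*}
each weight $w$ occurring in $\overline{\C_{(2,n,3,1)}^\perp}$ is accompanied by a complementary weight $n+1-w$ coming from the coset $\bone+\overline{\C_{(2,n,3,1)}^\perp}$; this raises the dimension to $2s+1$ and produces exactly the six-term enumerator in the statement.

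The binomial expansion and the complementation of weights are routine; the only step requiring genuine verification is the evenness of the two dual weights, which underpins the entire extension argument. I would settle it once and for all via the membership $\bone\in\C_{(2,n,3,1)}$, rather than by checking the parity of each fraction case-by-case in the residue of $s$.
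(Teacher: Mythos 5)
Your proof is correct and follows essentially the same route as the paper's: record the parameters of the extension of an odd-distance code, rule out $d\geq 5$ via the sphere packing bound (your expansion $\frac{(u^2-1)(u^2-28)}{18}>0$ is the same inequality as the paper's $\frac{n^2+3n+4}{2}>6n+2=2^{2s+1}$, just written in terms of $u=2^s$), and then use the decomposition $\overline{\C_{(2,n,3,1)}}^\perp=\overline{\C_{(2,n,3,1)}^\perp}\cup\bigl(\bone+\overline{\C_{(2,n,3,1)}^\perp}\bigr)$ together with weight complementation $w\mapsto n+1-w$ to get the six-term enumerator. The one point where you do something different, and arguably cleaner, is the evenness of the dual weights: the paper simply asserts "by Theorem \ref{thm-five1}, $\C_{(2,n,3,1)}^\perp$ has only even weights," which tacitly requires verifying that both $\frac{2^{2s-1}+(-1)^s2^s}{3}$ and $\frac{2^{2s-1}-(-1)^s2^{s-1}}{3}$ are even integers, whereas your observation that $(x^n-1)/(x-1)$ vanishes at every root of $\m_{\beta}(x)$, hence $\bone\in\C_{(2,n,3,1)}$ and every dual codeword has even weight, settles this uniformly in $s$ without inspecting the weight formulas at all.
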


\begin{proof} 
Since $3$ is the minimum distance of $\C_{(2,n,  3,   1)}$ and is odd, 
the parameters of $\overline{\C_{(2,n,  3,   1)}}$ follow from the parameters of 
$\C_{(2,n,  3,   1)}$ directly.  

We now prove the distance-optimality of the code $\overline{\C_{(2,n,  3,   1)}}$.  
Note that $s \geq 2$. We have 
\begin{eqnarray*}
\sum_{i=0}^2 \binom{n+1}{i} = \frac{n^2+3n+4}{2}  > 2^{2s+1}=6n+2.  
\end{eqnarray*} 
By the sphere packing bound, there is no binary linear code with parameters $[n+1, n-2s, d\geq 5]$.

By Theorem \ref{thm-five1},  $\C_{(2,n,  3,   1)}^\perp$ has only even weights and
the following hold:
\begin{itemize}
\item The extended coordinate of each codeword in the extended code  $\overline{\C_{(2,n,  3,   1)}^\perp}$ is always 0. 
\item The code $\overline{\C_{(2,n,  3,   1)}^\perp}$ has parameters  
$[n+1, 2s]$ and weight enumerator 
$$
1+nz^{\frac{2^{2s-1}+(-1)^s 2^s}{3}}  +2nz^{\frac{2^{2s-1}-(-1)^s 2^{s-1}}{3}}. 
$$ 
\end{itemize} 
It is then easily seen that 
\begin{eqnarray*}
\overline{\C_{(2,n,  3,   1)}}^\perp= \overline{\C_{(2,n,  3,   1)}^\perp} \cup (\bone +  \overline{\C_{(2,n,  3,   1)}^\perp} ).  
\end{eqnarray*}
The remaining desired conclusions follow. 
\end{proof} 

\begin{example} {\rm
Let $s=3$. Then the extended code $\overline{\C_{(2,n,  3,   1)}}$  has parameters $[22,15,4]$ and the code 
 $\overline{\C_{(2,n,  3,   1)}}^\perp$ has parameters 
$[22,7,8]$ and weight enumerator  
$$ 
1 + 21z^{8}  + 42z^{10} + 42z^{12} + 21z^{14} + z^{22}. 
$$}
\end{example} 

\begin{example} {\rm
Let $s=4$. Then the extended code $\overline{\C_{(2,n,  3,   1)}}$  has parameters $[86,77,4]$ and the code 
 $\overline{\C_{(2,n,  3,   1)}}^\perp$ has parameters 
$[86,9,38]$ and weight enumerator  
$$ 
1 + 85z^{38}  + 170z^{40} +170 z^{46} +85 z^{48} + z^{86}. 
$$}
\end{example}

\subsection{The codes $\C_{(2,n,  5,   1)}$  and their extended codes} 

In this section, we study the parameters of the code $\C_{(2,n,  5,   1)}$ and its extended code 
$ \overline{\C_{(2,n,  5,   1)}}$.  Our sixth infinite family of distance-optimal binary codes is introduced in the following theorem.

\begin{theorem}\label{thm-s61}
Let $s \geq 4$.  Then $\C_{(2,n,  5,   1)}$  has parameters $[n, n-4s, 5\leq d \leq 6]$.  The extended code 
$ \overline{\C_{(2,n,  5,   1)}}$ has parameters 
$[n+1, n-4s, 6]$ and is distance-optimal with respect to the sphere packing bound for $s\geq 5$.   
\end{theorem}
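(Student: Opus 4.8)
The plan is to follow the same strategy as in Theorem \ref{thm-s31}, adapted to the present length $n=(4^s-1)/3$. First I would pin down the generator polynomial. The defining set of $\C_{(2,n,5,1)}$ is the union of the $2$-cyclotomic cosets $C_1,C_2,C_3,C_4$; over $\gf(2)$ we have $\{1,2,4\}\subseteq C_1$, while by Lemma \ref{lem-sec5} the integer $3$ is a coset leader distinct from $1$. Hence the generator polynomial collapses to $\m_{\beta}(x)\m_{\beta^3}(x)$. Since $1$ and $3$ are distinct coset leaders, $C_1\cap C_3=\emptyset$ and the two minimal polynomials are coprime, so by Lemma \ref{lem-sec5} (which gives $|C_1|=|C_3|=2s$) the degree of the generator polynomial equals $4s$. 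Therefore $\dim(\C_{(2,n,5,1)})=n-4s$, and consequently $\dim(\overline{\C_{(2,n,5,1)}})=n-4s$ as well.

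Next I would establish $5\le d\le 6$ for the base code. The lower bound $d\ge 5$ is immediate from the BCH bound (Theorem \ref{thm-BCHbound}), since the defining set contains the four consecutive integers $1,2,3,4$. For the upper bound I would invoke the sphere packing bound: if $d\ge 7$, then $\lfloor (d-1)/2\rfloor\ge 3$ would force $\sum_{i=0}^3\binom{n}{i}\le 2^{4s}$. Writing $u=2^s$ and $n=(u^2-1)/3$, I would expand $6\bigl(\sum_{i=0}^3\binom{n}{i}-2^{4s}\bigr)$ into a polynomial in $u$ and verify its positivity for $u\ge 16$ (that is, $s\ge 4$), exactly in the style of Theorem \ref{thm-s31}; the dominant term is $\binom{n}{3}\sim u^6/162$, which already outgrows $2^{4s}=u^4$. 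This contradiction yields $d\le 6$, so $5\le d(\C_{(2,n,5,1)})\le 6$.

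Finally I would handle the extended code. Whether $d=5$ or $d=6$, the overall parity-check extension gives $d(\overline{\C_{(2,n,5,1)}})=6$: when $d=5$ is odd every minimum-weight word has its weight raised to $6$ while no codeword can attain extended weight below $6$, and when $d=6$ is even the distance is preserved. Hence $\overline{\C_{(2,n,5,1)}}$ has parameters $[n+1,n-4s,6]$. For its distance-optimality I would again apply the sphere packing bound: excluding an $[n+1,n-4s,d'\ge 7]$ code amounts to showing $\sum_{i=0}^3\binom{n+1}{i}>2^{4s+1}$, which I would verify by expanding $6\bigl(\sum_{i=0}^3\binom{n+1}{i}-2^{4s+1}\bigr)$ as a polynomial in $u$ and checking positivity for $u\ge 32$, i.e. $s\ge 5$.

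The only genuinely delicate point is the threshold in this last inequality: a direct check at $s=4$ gives $\sum_{i=0}^3\binom{n+1}{i}<2^{4s+1}$, so the sphere packing argument cannot certify optimality of the extended code there, which is precisely why the hypothesis is $s\ge 5$. Everything else is routine bookkeeping: the cyclotomic-coset data is supplied by Lemma \ref{lem-sec5}, and the two sphere-packing inequalities are elementary polynomial inequalities in $u=2^s$ whose leading behaviour ($u^6$ versus $u^4$) makes positivity evident for all large $s$, with the exact bounds $s\ge 4$ and $s\ge 5$ determined by the lower-order terms.
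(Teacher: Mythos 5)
Your proposal is correct and follows essentially the same route as the paper's proof: dimension via Lemma \ref{lem-sec5} and the coset structure, the BCH bound for $d\geq 5$, the sphere packing inequality $\sum_{i=0}^{3}\binom{n}{i}>2^{4s}$ for $d\leq 6$, the parity argument for $d(\overline{\C_{(2,n,5,1)}})=6$ in either case, and the second sphere packing inequality $\sum_{i=0}^{3}\binom{n+1}{i}>2^{4s+1}$ for optimality when $s\geq 5$. The only cosmetic difference is that you expand the inequalities in $u=2^s$ while the paper writes them in $n$ using $3n+1=2^{2s}$, and your explicit check that the second inequality fails at $s=4$ is a nice justification of the threshold that the paper leaves implicit.
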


\begin{proof}
By definition and Lemma \ref{lem-sec5},  $\C_{(2,n,  5,   1)}$ is the binary cyclic code of length $n$ with generator polynomial 
$\m_{\beta}(x) \m_{\beta^3}(x) $, where $\gcd(\m_{\beta}(x), \m_{\beta^3}(x))=1$. Then by Lemma \ref{lem-sec5} we have 
$$
\dim( \C_{(2,n,  5,   1)})=n-|C_1|-|C_3|=n-4s
$$ 
and 
$$
\dim( \C_{(2,n,  5,   1)}^\perp )=4s. 
$$
Consequently,  $ \dim(\overline{\C_{(2,n,  5,   1)}})=n-4s$.

It follows from the BCH bound that $d(\C_{(2,n,  5,   1)}) \geq 5$.  
When $s\geq 4$,  we have $n \geq 341$. It can verified that 
\begin{eqnarray*}
\lefteqn{  \sum_{i=0}^3 \binom{n}{i} -2^{4s} } \\  
&=& \sum_{i=0}^3 \binom{n}{i} -(3n+1)^2 \\ 
&=& \frac{n(n^2 - 54n - 31)}{6} \\
&\geq & \frac{n( 287n - 64)}{6} \\
&>& 0. 
\end{eqnarray*} 
It then follows from the sphere packing bound that $d(\C_{(2,n,  5,   1)}) \leq 6$.   
Consequently, we have that  $5 \leq d(\C_{(2,n,  5,   1)}) \leq 6$. 

In both cases that $d(\C_{(2,n,  5,   1)})=5$ and $d(\C_{(2,n,  5,   1)})=6$, we have $d( \overline{\C_{(2,n,  5,   1)}})=6$. Hence, the parameters of the code $ \overline{\C_{(2,n,  5,   1)}}$  follow.    
When $s \geq 5$,  we have $n \geq 341$. It can be verified that 
\begin{eqnarray*}
\lefteqn{ \sum_{i=0}^3 \binom{n+1}{i} -2^{4s+1} }\\ 
&=& \sum_{i=0}^3 \binom{n+1}{i} -2(3n+1)^2 \\ 
&=& \frac{n(n^2 - 105n - 64)}{6} \\
&\geq & \frac{n( 236n - 64)}{6} \\
&>& 0. 
\end{eqnarray*} 
Hence,  the code $ \overline{\C_{(2,n,  5,   1)}}$ is distance-optimal with respect to the sphere packing bound 
for $s \geq 5$. 
\end{proof}

\begin{conj}
The minimum distance of $\C_{(2,n,  5,   1)}$ is $5$ for $s \geq 2$. 
\end{conj} 

\begin{remark}
Notice that $n=(4^s-1)/3$.  We have $n \equiv 0 \pmod{5}$ when $s$ is even. 
It then follows from Theorem \ref{thm-LLD} that $d(\C_{(2,n,  5,   1)}) = 5$ if $s$ is even.  It would be good if this conjecture can be confirmed in the odd $s$ case.  
\end{remark} 

\begin{example} {\rm
Let $s = 3$.  Then $\C_{(2,n,  5,   1)}$  has parameters $[21,12,5]$ and is optimal.  Its extended code 
$ \overline{\C_{(2,n,  5,   1)}}$ has parameters $[22,12,6]$ and is optimal. }
\end{example}

\begin{example} {\rm
Let $s=4$. Then $\C_{(2,n,  5,   1)}$  has parameters $[85,69,5]$, 
while the optimal binary linear code has parameters  $[85,69,6]$. The extended code 
$ \overline{\C_{(2,n,  5,   1)}}$ has parameters $[86,69,6]$, the best known parameters \cite{G}. }
\end{example}

\section{Infinitely many families of distance-optimal binary cyclic codes with arbitrarily large minimum distance}\label{sec-newnew} 

Let $s \geq 3$ be a positive integer and $n=(2^s-1)/\lambda$, where $\lambda < 2^{\lfloor s/2 \rfloor}$ and $\lambda$ is a constant divisor of $2^s-1$.  Hence, $\lambda$ does not vary according to $s$.  Let $\alpha$ be a primitive element of $\gf(2^s)$, and let $\beta=\alpha^{\lambda}$. 
Then $\beta$ is an $n$-th primitive root of unity.  
Let $\m_{\beta}(x)$ denote the minimal polynomial of $\beta$ over $\gf(2)$.  Throughout this section, we fix the notation.  We will need the following lemma in this section. 

\begin{lemma}\label{lem-secfinal} 
Recall that $C_i$ denotes the $2$-cyclotomic coset modulo $n$ containing $i$.  
Then the following hold. 
\begin{itemize} 
\item $\ord_n(2)=s$. 
\item For each odd $i$ with $1 \leq i \leq 2^{\lceil s/2\rceil}/\lambda$, we have $|C_i|=s$.  
\item All the odd $i$ with  $1 \leq i \leq 2^{\lceil s/2\rceil}/\lambda$ are the coset leaders of these $C_i$.  
\end{itemize}
\end{lemma}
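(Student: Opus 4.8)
The plan is to treat the first bullet as known and to derive the remaining two bullets from Lemma~\ref{lem-AKS}. The claim $\ord_n(2)=s$ follows from the analysis of the order of $2$ modulo $n=(2^s-1)/\lambda$ carried out in \cite{XCY24}, so I would simply cite it. With $m:=\ord_n(2)=s$ in hand, the goal reduces to verifying that the pair $(n,q=2)$ satisfies the hypothesis of Lemma~\ref{lem-AKS}, namely $2^{\lfloor s/2\rfloor}<n\le 2^s-1$. Once this is checked, both the cardinality statement and the coset-leader statement are immediate, provided the range $1\le i\le n\,2^{\lceil s/2\rceil}/(2^s-1)$ appearing in that lemma is shown to coincide with the stated range $1\le i\le 2^{\lceil s/2\rceil}/\lambda$.

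The upper bound $n\le 2^s-1$ is trivial since $\lambda\ge 1$. The lower bound $n>2^{\lfloor s/2\rfloor}$ is the only place where the hypothesis $\lambda<2^{\lfloor s/2\rfloor}$ is used, and I expect this to be the main (though still elementary) obstacle. The key observation is that $\lambda$ is a positive integer, so $\lambda<2^{\lfloor s/2\rfloor}$ forces $\lambda\le 2^{\lfloor s/2\rfloor}-1$; this integer gap is essential, because the weaker real-valued bound $\lambda<2^{\lfloor s/2\rfloor}$ alone would give only $n>(2^s-1)/2^{\lfloor s/2\rfloor}$, which for even $s$ fails to exceed $2^{s/2}$. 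From $\lambda\le 2^{\lfloor s/2\rfloor}-1$ I would deduce $n\ge (2^s-1)/(2^{\lfloor s/2\rfloor}-1)$ and then split on the parity of $s$: for even $s$ the right-hand side factors as $2^{s/2}+1>2^{s/2}$, while for odd $s$, writing $a=2^{(s-1)/2}$ so that $2^s-1=2a^2-1$, the one-line estimate $(2a^2-1)/(a-1)>a$ (equivalently $a^2+a-1>0$) gives $n>2^{(s-1)/2}$. In both cases $n>2^{\lfloor s/2\rfloor}$, as required.

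With the hypothesis verified, I would apply Lemma~\ref{lem-AKS} with $q=2$ and $m=s$. The range in that lemma is $1\le i\le n\,2^{\lceil s/2\rceil}/(2^s-1)$, and substituting $n=(2^s-1)/\lambda$ collapses it immediately to $1\le i\le 2^{\lceil s/2\rceil}/\lambda$, matching the statement. The lemma then yields $|C_i|=m=s$ for every $i$ in this range, giving the second bullet; and since for $q=2$ the condition $i\not\equiv 0\pmod{q}$ is exactly the oddness of $i$, the lemma's coset-leader conclusion gives the third bullet for all odd $i$ in the range. The only genuine content is the parity case split establishing the lower bound; everything else is a direct transcription of Lemma~\ref{lem-AKS} under the substitution $n=(2^s-1)/\lambda$.
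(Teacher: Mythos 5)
Your proposal is correct and follows essentially the same route as the paper: cite \cite{XCY24} for $\ord_n(2)=s$ and then invoke Lemma~\ref{lem-AKS}, under which the range $1\le i\le n\,2^{\lceil s/2\rceil}/(2^s-1)$ collapses to $1\le i\le 2^{\lceil s/2\rceil}/\lambda$. The only difference is that you explicitly verify the hypothesis $2^{\lfloor s/2\rfloor}<n\le 2^s-1$ (via the integer gap $\lambda\le 2^{\lfloor s/2\rfloor}-1$ and the parity split on $s$), a step the paper's proof leaves implicit; this is a worthwhile addition, not a deviation.
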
 

\begin{proof} 
The conclusion of the first part is known (see \cite{XCY24} for example).  
Then the desired conclusions of the remaining two parts follow from Lemma \ref{lem-AKS}.  
\end{proof} 

For any fixed integer $\ell \geq 2$,  define 
$s_2(\ell,\lambda)$ to be the smallest positive integer such that $(2\ell-1)\lambda \leq 2^{\lceil s_2(\ell,\lambda)/2 \rceil}$.  
We consider the BCH code $\C_{(2,n, 2\ell, 0)}$.  If $s \geq s_2(\ell,\lambda)$, by definition and Lemma 
\ref{lem-secfinal},  
\begin{eqnarray*}
\dim(\C_{(2,n, 2\ell, 0)}) &=& n-\sum_{i=1}^{\ell-1} \deg(\m_{\beta^{2i-1}}(x))-1 \\ 
&=& n-\sum_{i=1}^{\ell-1} |C_{2i-1}| -1 \\ 
&=& n-(\ell-1)s-1.
\end{eqnarray*}

By the BCH bound we have 
$d(\C_{(2,n, 2\ell, 0)})\geq 2\ell$.  
Note that 
$$ 
\ell! \left(\sum_{i=0}^\ell \binom{n}{i} -2(\lambda n+1)^{\ell-1} \right)
=n^\ell +\sum_{i=0}^{\ell-1} a_i(\ell, \lambda) n^i, 
$$ 
where these $a_i(\ell, \lambda)$ are integers depending on $\ell$ and $\lambda$ only.  
Define $$a(\ell, \lambda) =\max_{0 \leq i \leq \ell-1}\{|a_i(\ell, \lambda) |\}$$ and 
 $s_1(\ell, \lambda)$ to be the smallest positive integer such that $1+\ell \lambda  a(\ell, \lambda) < 2^{\lceil s_1(\ell,\lambda)/2 \rceil}$.   
If $s \geq s_1(\ell,\lambda)$, we have 
\begin{eqnarray*}
 n^\ell +\sum_{i=0}^{\ell-1} a_i(\ell, \lambda)  n^i 
 \geq    n^\ell -\sum_{i=0}^{\ell-1}| a_i(\ell, \lambda) |n^i 
\geq    n^\ell -  a(\ell, \lambda) \ell n^{\ell-1} 
>0. 
\end{eqnarray*}
Define $s(\ell, \lambda) =\max\{s_1(\ell, \lambda), s_2(\ell,\lambda)\}$.  
It then follows from the sphere packing bound that $d(\C_{(2,2^s-1, 2\ell, 0)})\leq 2\ell$ if $s\geq s(\ell, \lambda)$. 
Consequently, $d(\C_{(2,2^s-1, 2\ell, 0)}) = 2\ell$ if $s\geq s(\ell, \lambda)$. 
We  have now  proved the following theorem. 

\begin{theorem}\label{thm-infinite}
Let notation be the same as before.  If $s\geq s(\ell, \lambda)$, then the BCH code $\C_{(2,n, 2\ell, 0)}$ has parameters $[n,  n-1-(\ell-1)s, 2\ell]$ and is distance-optimal with respect to the sphere packing bound. 
\end{theorem}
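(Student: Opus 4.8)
The plan is to read off the dimension from the cyclotomic coset structure, get the lower bound $d \geq 2\ell$ from the BCH bound, and then force $d \leq 2\ell$ by showing that a code with $d \geq 2\ell+1$ and these parameters would violate the sphere packing bound once $s$ is large. First I would fix the generator polynomial. Since we work over $\gf(2)$ we have $C_{2j}=C_j$, so the defining set $\{0,1,\dots,2\ell-2\}$ collapses to $C_0 \cup C_1 \cup C_3 \cup \dots \cup C_{2\ell-3}$, and the generator polynomial is $\m_{\beta^0}(x)\prod_{i=1}^{\ell-1}\m_{\beta^{2i-1}}(x)$. Invoking Lemma~\ref{lem-secfinal}, for $s \geq s_2(\ell,\lambda)$ every odd index $2i-1$ with $1 \leq 2i-1 \leq 2\ell-3$ lies in the admissible range, so each $C_{2i-1}$ is a pairwise-distinct coset of cardinality $s$; hence $\deg g = 1 + (\ell-1)s$ and $\dim \C_{(2,n,2\ell,0)} = n - 1 - (\ell-1)s$. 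The lower bound is then immediate: the defining set contains the $2\ell-1$ consecutive integers $0,1,\dots,2\ell-2$, so Theorem~\ref{thm-BCHbound} gives $d \geq 2\ell$.

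Next I would prove the matching upper bound by contradiction. If some $[n,\,n-1-(\ell-1)s]$ binary code had minimum distance $d \geq 2\ell+1$, then $\lfloor (d-1)/2\rfloor \geq \ell$ and the sphere packing bound would force $\sum_{i=0}^{\ell}\binom{n}{i} \leq 2^{\,n-k}$. The decisive identity is $2^s = \lambda n + 1$, which rewrites the right-hand side as $2^{\,1+(\ell-1)s} = 2(\lambda n+1)^{\ell-1}$, a polynomial in $n$ of degree $\ell-1$. I would therefore compare the two sides as polynomials in $n$: after multiplying by $\ell!$, the quantity $\ell!\sum_{i=0}^\ell\binom{n}{i} - 2\,\ell!\,(\lambda n+1)^{\ell-1}$ becomes a monic polynomial $n^\ell + \sum_{i=0}^{\ell-1} a_i(\ell,\lambda)\, n^i$, whose degree-$\ell$ term comes solely from $\ell!\binom{n}{\ell}=n(n-1)\cdots(n-\ell+1)$ and whose lower coefficients $a_i(\ell,\lambda)$ depend only on $\ell$ and $\lambda$, not on $s$.

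The main obstacle, and the point on which the claim of \emph{arbitrarily large} minimum distance rests, is making this comparison uniform in $s$ with a threshold depending only on $(\ell,\lambda)$. Here I would estimate the tail crudely: with $a(\ell,\lambda) = \max_i |a_i(\ell,\lambda)|$ one has $n^\ell + \sum_{i=0}^{\ell-1} a_i n^i \geq n^\ell - a(\ell,\lambda)\,\ell\, n^{\ell-1} = n^{\ell-1}\bigl(n - \ell\, a(\ell,\lambda)\bigr)$, which is positive as soon as $n > \ell\, a(\ell,\lambda)$, i.e. $2^s > 1 + \ell\lambda\, a(\ell,\lambda)$. Choosing $s_1(\ell,\lambda)$ so that $1+\ell\lambda\, a(\ell,\lambda) < 2^{\lceil s_1/2\rceil}$ and setting $s(\ell,\lambda) = \max\{s_1(\ell,\lambda), s_2(\ell,\lambda)\}$ secures simultaneously the coset hypotheses and the strict inequality $\sum_{i=0}^\ell\binom{n}{i} > 2^{\,n-k}$. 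This single inequality does double duty: it contradicts the sphere packing bound for any $[n,\,n-1-(\ell-1)s]$ code of distance $\geq 2\ell+1$, so no such code exists. In particular our code satisfies $d \leq 2\ell$, whence $d = 2\ell$, and it is distance-optimal with respect to the sphere packing bound. Finally, since $\ell \geq 2$ is an arbitrary fixed integer, letting $\ell$ grow produces distance-optimal codes of unbounded minimum distance, which is the intended resolution of the open problem.
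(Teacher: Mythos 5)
Your proposal is correct and follows essentially the same route as the paper: the dimension count via Lemma~\ref{lem-secfinal} (with the threshold $s_2(\ell,\lambda)$), the BCH bound for $d \geq 2\ell$, and the upper bound $d \leq 2\ell$ obtained by rewriting $2^{n-k} = 2(\lambda n+1)^{\ell-1}$, multiplying by $\ell!$, and bounding the tail of the resulting monic polynomial by $\ell\, a(\ell,\lambda)\, n^{\ell-1}$ with the same choice of $s_1(\ell,\lambda)$ and $s(\ell,\lambda)=\max\{s_1,s_2\}$. Your explicit justification that the defining set collapses to $C_0 \cup C_1 \cup C_3 \cup \cdots \cup C_{2\ell-3}$ is a minor elaboration the paper leaves implicit, not a different method.
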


\begin{remark} 
Theorem \ref{thm-infinite} is a generalisation of some known results in the literature.  For example, 
when $\lambda=1$ and $\ell \in \{1,2,3,4 \}$, the parameters of the codes  $\C_{(2,n, 2\ell, 0)}$ are known in 
the literature (see \cite{Dingbook18} for example), although the distance-optimality with respect to 
the sphere packing bound may not have been mentioned in the literature.  
When $\ell =2$ and $\ell=4$, the code $\C_{(2,n, 2\ell, 0)}$ was known to be distance-optimal \cite{Xieetal}.  
When $\ell =3$, the code $\C_{(2,n, 2\ell, 0)}$ was known to be distance-optimal \cite{ChenWu}
\end{remark} 

\begin{remark}
For each constant divisor $\lambda$ of $2^s-1$ and each $\ell \geq 2$, there is a constant 
$s(\ell, \lambda)$ such that $\{ \C_{(2,n, 2\ell, 0)}: s \geq  s(\ell, \lambda)\}$ is an infinite family of 
distant-optimal binary cyclic codes with respect to the sphere packing bound, where each code 
$\C_{(2,n, 2\ell, 0)}$  has parameters 
$[n,  n-1-(\ell-1)s, 2\ell]$.  This shows that Theorem \ref{thm-infinite} has given infinitely many families of 
distant-optimal binary cyclic codes with respect to the sphere packing bound.  Hence,  
Theorem \ref{thm-infinite} has made a 75-year breakthrough in coding theory in the sense that 
it gives the first infinite family of distance-optimal binary cyclic codes with arbitrarily large minimum distance 
with respect to the sphere packing bound. 
\end{remark}

Table \ref{tab-one} provides some experimental data on the threshold value $s(\ell, \lambda)$ for code $\C_{(2,n, 2\ell, 0)}$ being distance-optimal  with respect to the sphere packing bound when $s \geq s(\ell, \lambda)$.  

\begin{table}[htbp] 
\centering 
\caption{The threshold value $s(\ell, \lambda)$ for $ \C_{(2,2^s-1, 2\ell, 0)}$ being distance-optimal}\label{tab-one}
\begin{tabular}{|r|r|r|}
\hline 
$\lambda$ & $\ell$ & $s(\ell, \lambda)$  \\ \hline 
1 & 2 &    3  \\  \hline 
1 &3  &    4  \\  \hline 
1 &4   &    6  \\  \hline 
1 &5 &   8   \\  \hline 
1 &6  & 11     \\  \hline 
1 & 7   & 14     \\   \hline  
1 &8 &  17    \\  \hline 
1 & 9 & 20     \\  \hline 
1 & 10   &  23    \\   \hline  
\end{tabular}
\end{table}

\section{Open problems}\label{sec-open} 

For the three types of binary BCH codes $\C_{(2,n,  \delta,   b)}$, where 
\begin{itemize}
\item $n=(2^{2s}+1)(2^s-1)$ for positive integer $s$, or 
\item $n=2^{2s}+2^s+1$ for positive integer $s$, or 
\item $n=(2^{s}-1)/\lambda$ for positive integer $s$ and $\lambda >1$ being a constant divisor of 
$2^s-1$, 
\end{itemize} 
we have the following open problems.  

\begin{problem} 
Determine the automorphism groups of code $\C_{(2,n,  \delta,   b)}$ and code $\overline{\C_{(2,n,  \delta,   b)}}$. 
\end{problem} 

\begin{problem} 
Does $\overline{\C_{(2,n,  \delta,   1)}}$ support $2$-designs for certain $\delta$ \cite{Dingbook18}? 
\end{problem} 

\begin{problem} 
Determine $d(\C_{(2,n,  \delta,   b)})$ for $\delta \geq 7$ and $d(\overline{\C_{(2,n,  \delta,   b)}})$ for $\delta \geq 8$.  
\end{problem}  

\begin{problem} 
Are there $\delta$ and $b$ such that $\dim(\C_{(2,n,  \delta,   b)}) \geq (n-1)/2$ and 
$d(\C_{(2,n,  \delta,   b)}) \geq \sqrt{n}/2$?   
\end{problem}  

\begin{problem} 
Determine $d(\C_{(2,n,  \delta,   b)}^\perp)$ and $d(\overline{\C_{(2,n,  \delta,   b)}}^\perp)$ 
for $\delta \geq 7.$ 
\end{problem}  

\begin{problem}
Determine the covering radii of the distance-optimal binary codes presented in this paper. 
\end{problem}

It would be interesting to settle some of the open problems under certain conditions.

\section{Summary and concluding remarks}\label{sec-summary}

The following infinite families of distance-optimal binary linear codes were presented in this paper. 

\begin{itemize}
\item The infinite family of the binary codes $\overline{\C_{(2,n,  3,   1)}}$  with parameters $[n+1, n-4s, 4]$,  where $n=(2^{2s}+1)(2^s-1)$ and $s \geq 2$ (see Theorem \ref{thm-one2}).

\item The infinite family of the binary codes $\C_{(2,n, 6,0)}$  with parameters $[n+1, n-8s-1, 6]$,  where $n=(2^{2s}+1)(2^s-1)$ and $s \geq 8$ (see Theorem \ref{thm-one2-add}).

\item The infinite family of the binary codes $ \overline{\C_{(2,n,  5,   1)}}$ with parameters 
$[n+1, n-8s, 6]$,  where $n=(2^{2s}+1)(2^s-1)$ and $s \geq 4$ (see Theorem \ref{thm-s31}).

\item The infinite family of the binary codes $\overline{\C_{(2,n,3,1)}}$ with parameters $[n+1, n-3s,4]$, where $n=2^{2s}+2^s+1$ and $s\geq 2$ (see Theorem \ref{thm-two1}).  

\item The infinite family of the binary codes $\overline{\C_{(2,n,  3,   1)}}$  with parameters $[n+1, n-2s, 4]$, where $n=(4^s-1)/3$ and 
$s \geq 3$ (see Theorem \ref{thm-five1}).  

\item The infinite family of the binary codes $ \overline{\C_{(2,n,  5,   1)}}$ with parameters 
$[n+1, n-4s, 6]$,  where $n=(4^s-1)/3$ and 
$s \geq 3$ (see Theorem \ref{thm-s61}).  

\item Infinitely many families of the binary cyclic codes $\C_{(2,(2^s-1)/\lambda, 2\ell, 0)}$ with parameters $[(2^s-1)/\lambda,  (2^s-1)/\lambda-1-(\ell-1)s, 2\ell]$, where $\ell \geq 2$ and $\lambda$ is a constant divisor of $2^s-1$ (see Theorem \ref{thm-infinite}). 
\end{itemize}
In addition, two infinite families of five-weight binary codes were presented in this paper.  
This paper has made a 75-year breakthrough in coding theory in the sense that 
it presents the first infinite family of distance-optimal linear codes with arbitrarily large minimum distance 
with respect to the sphere packing bound. 

The reader is informed that the idea for constructing distance-optimal binary cyclic codes with 
respect to the sphere packing bound in  Section \ref{sec-newnew} does not work for the code $\C_{(2,n, 2\ell, 0)}$ for $n$ being 
of the following firms: 
\begin{itemize}
\item $n=(2^{2s}+1)(2^s-1)=\frac{2^{4s}-1}{2^s+1}$ for positive integer $s$ and $\ell \geq 4$ (it works only for $\ell=2$ and $\ell =3$). 
\item $n=2^{2s}+2^s+1=\frac{2^{3s}-1}{2^s-1}$ for positive integer $s$ and $\ell \geq 3$ (it works only for $\ell=2$). 
\end{itemize} 
In the two cases $\lambda=2^s+1$ and $\lambda=2^s-1$, the value of $\lambda$ depends on $s$ and 
is not a constant divisor of $2^{4s}-1$ and $2^{3s}-1$.  When $n=\frac{2^{4s}-1}{2^s+1}$ and $\ell \geq 4$, and when $n=\frac{2^{3s}-1}{2^s-1}$ and $\ell \geq 3$,  the sphere packing bound cannot be used to prove that $d(\C_{(2,n,2\ell,0}))\leq 2\ell$ for $s$ being large enough.  
This justifies that we need to treat the BCH codes of length $n$ of different forms separately.


\begin{thebibliography}{10}
\bibitem{AKS}  S. A. Aly, A. Klappenecker, and P. K. Sarvepalli, On quantum and classical BCH codes, IEEE Trans. Inf. Theory  \textbf{53} (3) (2007) 1183--1188.

    
\bibitem{BC60} R. C. Bose, D. K. Ray-Chaudhuri, On a class of error correcting binary group codes, Information and Control  \textbf{3} (1960) 68--79.

        
\bibitem{BC602} R. C. Bose, D. K. Ray-Chaudhuri, Further results on error correcting binary group codes, Information and Control \textbf{3} (1962) 279--290.
		

\bibitem{Chen251} H. Chen,  Griesmer and optimal linear codes from the affine Solomon–Stiffler construction, IEEE Trans.  Inf. Theory \textbf{71} (9) (2025) 6834--6843.  

        
\bibitem{ChenWu} H.  Chen, Y. Wu, Cyclic and negacyclic codes with optimal and best known minimum distances, IEEE Trans.  Information Theory \textbf{70} (12) (2024) 8628--8635.  
				
 		
\bibitem{Dingbook15} C. Ding, Codes from Difference Sets, World Scientific, Singapore (2015).
		
\bibitem{DingLiSurvey} C. Ding, C. Li,   BCH cyclic codes, Discrete Mathematics \textbf{347} (5) (2024),  113918. 

\bibitem{Dingbook18} C. Ding, C. Tang, Designs from Linear Codes, Second Edition, World Scientific, Singapore (2022).
	
\bibitem{DingYangSurv} C. Ding, J. Yang, Weights in irreducible cyclic codes, Discrete Mathematics \textbf{313} (2013) 434--446. 		

\bibitem{GZ61} D. C. Gorenstein, N. Zierler, A class of error-correcting codes in $p^m$ symbols, J. SIAM \textbf{9} (1961) 207--214.
		
\bibitem{G} M. Grassl, Bounds on the minimum distance of linear codes and quantum codes, Online available at http://www.codetables.de, accessed on 2024-1-10.

        
\bibitem{Hamada} N. Hamada,  A characterization of some $[n,k,d;q]$-codes meeting the Griesmer bound using a minihyper in a finite projective geometry,  Discrete Math. \textbf{116} (1–3) (1993) 229–268.		

\bibitem{Hamming} R. W. Hamming, Error detecting and error correcting codes, Bell System Technical Journal \textbf{29} (2) (1950) 147–160. 


\bibitem{Hell85} T. Helleseth, Projective codes meeting the Griesmer bound, Discrete Math. \textbf{106–107} (1965) 265--271.		

\bibitem{Hell81} T. Helleseth, H. C. A. van Tilborg,  A new class of codes meeting the Griesmer bound, IEEE Trans. Inf. Theory \textbf{27} (5) (1981) 548--555.		

\bibitem{HD20} Z.  Heng,  C. Ding,  W. Wang,  Optimal binary linear codes from maximal arcs,  IEEE Trans. Inf. Theory \textbf{66} (9) (2020) 5387--5394.


\bibitem{HY16} Z.  Heng,  Q.  Yue, Several classes of cyclic codes with either optimal three weights or a few weights, IEEE Trans. Inf. Theory \textbf{62} (8) (2016) 4501--4513. 			

    
\bibitem{Hocq59} A. Hocquenghem, Codes correcteurs d'erreurs. Chiffres (Paris) \textbf{2} (1959) 147--156.

        
\bibitem{HuffmanPless03bk} W. C. Huffman, V. S. Pless,  Fundamentals of Error-Correcting Codes, Cambridge University Press, Cambridge (2003).

                        
\bibitem{LiLiDing17cc} S. Li, C. Li, C. Ding, H. Liu, Two families of LCD BCH codes, IEEE Trans. Inf. Theory \textbf{63} (9) (2017) 5699--5717.	

        
\bibitem{SS65}	G. Solomon and J. J. Stiffler,  Algebraically punctured cyclic codes, Inf. Control \textbf{8} (2) (1965)170–179.


\bibitem{SLD24} Z. Sun, C. Li, C. Ding, An infinite family of binary cyclic codes with best parameters, IEEE Trans. Inf. Theory \textbf{70} (4) (2024) 2411--2418.

\bibitem{WZD21} X. Wang, D.  Zheng, C. Ding, Some punctured codes of several families of binary linear codes, IEEE Trans. Inf. Theory \textbf{67} (8) (2021) 5133--5148.


\bibitem{XCY24} C. Xie, H. Chen, C. Yuan, Explicit cyclic and quasi-cyclic codes with optimal, best known parameters, and large relative minimum distances,  IEEE Trans. Inf. Theory \textbf{70} (12) (2024) 8688--8697.


\bibitem{Xieetal} C. Xie, H. Chen,  H. Zhou,  Y. Li, and H. Lao, 
Optimal, almost optimal few-weight linear codes and related quantum codes, IEEE Trans. Inf. Theory \textbf{71} (6) (2025) 4250--4259.

\end{thebibliography}
\end{document}